\documentclass[12pt]{amsart}
\usepackage[]{color}
\usepackage{amssymb}
\usepackage[]{graphicx}
\usepackage{setspace}

\textwidth 7.25in
\textheight 9.5in
\oddsidemargin -0.3in
\evensidemargin -0.3in
\topmargin -0.5in
\parindent 0pt
\parskip 5pt

\def\R{\mathbb{R}}
\def\C{\mathbb{K}}
\def\K{\mathbb{K}}
\def\IS{\mathbb{IS}}
\def\IG{\mathbb{IG}}
\def\sgn{\mathop{\hbox{sgn}}}
\def\eps{\varepsilon}
\def\S{\mathcal S}
\def\F{\mathcal F}
\providecommand{\abs}[1]{\lvert#1\rvert}
\usepackage{hyperref}
\definecolor{myblue}{RGB}{25,25,112}

\hypersetup{
    colorlinks=true,
    linkcolor=blue,
    filecolor=magenta,      
    urlcolor=myblue,
    citecolor=blue,
}

\newtheorem{thm}{Theorem}
\newtheorem{lemma}{Lemma}
\newtheorem{cor}{Corollary}
\newtheorem{prop}{Proposition}
\newtheorem{remark}{Remark}
\usepackage{natbib}

\newcommand{\bb}{\color{black}}
\newcommand{\ee}{\color{black}}

\begin{document}
\bibliographystyle{plainnat}

\title[Permanence via invasion graphs]{Permanence via invasion graphs: \\Incorporating community assembly into\\ Modern Coexistence Theory}
\author[J. Hofbauer and S.J. Schreiber]{Josef Hofbauer and Sebastian J. Schreiber$^\dagger$}
\address{Department of Mathematics, University of Vienna}
\email{josef.hofbauer@univie.ac.at}
\address{Department of Evolution and Ecology, University of California, Davis }
\email{sschreiber@ucdavis.edu}
\footnotetext[2]{Corresponding author: sschreiber@ucdavis.edu}
\begin{abstract}  
To understand the mechanisms underlying species coexistence, ecologists often study invasion growth rates of theoretical and data-driven models. These  growth rates correspond to average per-capita growth rates of one species with respect to an ergodic measure supporting other species. In the ecological literature, coexistence often is equated with the invasion growth rates being positive. Intuitively, positive invasion growth rates ensure that species recover from being rare. To provide a mathematically rigorous framework for this approach, we prove theorems that answer two questions: (i) When do the signs of the invasion growth rates determine coexistence? (ii) When signs are sufficient, which invasion growth rates need to be positive? We focus on deterministic models and equate coexistence with permanence, i.e., a global attractor bounded away from extinction. For models satisfying certain technical assumptions, we introduce invasion graphs where vertices correspond to proper subsets of species (communities) supporting an ergodic measure and directed edges correspond to potential transitions between communities due to invasions by missing species. These directed edges are determined by the signs of invasion growth rates. When the invasion graph is acyclic (i.e. there is no sequence of invasions starting and ending at the same community), we show that permanence is determined by the signs of the invasion growth rates. In this case, permanence  is characterized by the invasibility of all $-i$ communities, i.e., communities without species $i$ where all other missing species having negative invasion growth rates. To illustrate the applicability of the results, we show that dissipative Lotka-Volterra models generically satisfy our technical assumptions and  computing their invasion graphs reduces to solving systems of linear equations. We also apply our results to models of competing species with pulsed resources or sharing a predator that exhibits switching behavior. Open problems for both deterministic and stochastic models are discussed. Our results highlight the importance of using concepts about community assembly to study coexistence. 
\end{abstract}
\maketitle

\section{Introduction}

Understanding the mechanisms allowing interacting populations to co-occur underlies many questions in ecology, evolution, and epidemiology: When are species limited by a common predator able to coexist? What maintains genetic diversity within a species? Why do multiple pathogen strains persist in host populations? One widely used metric for understanding coexistence is invasion growth rates: the average per-capita growth rates of populations when rare. This approach has a long history going back to the work of \citet{macarthur_levins1967,roughgarden1974,chesson-78} and \citet{turelli-78a}. This earlier work focused on the case of two competing species and lead to the mutual invasibility condition for coexistence. Namely, if each competitor has a positive invasion growth rate when the other species is at \bb stationarity\ee, then the competitors coexist.  \bb Mathematically, this form of coexistence corresponds to all species densities tending away from extinction. This corresponds to permanence or uniform persistence for deterministic models~\citep{schuster-etal-79,sigmund1984permanence,butler-freedman-waltman-86,garay-89,hofbauer-so-89,hutson-schmitt-92} and stochastic persistence for stochastic models~\citep{chesson-82,chesson-ellner-89,jmb-11}. \ee

A key feature of the mutual invasibility criterion is that coexistence is determined by the signs of invasion growth rates. For many classes of multispecies models, positive invasion growth rates of  at least one missing species from each subcommunity is a necessary condition for  permanence to persist under small structural perturbations, i.e., robust permanence~\citep{hutson-schmitt-92,jde-00}. However, it need not be sufficient as in the case of three competing species exhibiting a rock-paper-scissor dynamic~\citep{may-leonard-75}. In this case, all single species equilibria can be invaded by a missing species but coexistence depends on quantitative information about the  invasion growth rates at these equilibria~\citep{hofbauer-81,hofbauer-sigmund-98,jde-00,jde-10}.  This raises the question, when is it sufficient to know the sign structure of the invasion growth rates? Are rock-paper-scissor type dynamics the main barrier to qualitative conditions for permanence? 

Invasion growth rates are the basis of what has become known as modern coexistence theory (MCT) or Chesson's coexistence theory~\citep{chesson-94,letten2017,barabas2018,Chesson_2018,ellner_snyder2018,grainger_levine2019,grainger2019,godwin2020,chesson2020}. \bb To understand the mechanisms underlying coexistence, positive invasion growth rates \ee are decomposed into biologically meaningful components and compared to the corresponding components for the resident species~\citep{chesson-94,ellner_snyder2018}. This decomposition allows ecologists to identify which coexistence mechanisms may or may not be operating in their system~\citep{chesson-94,adler-etal-10,ellner-etal-16,ellner_snyder2018}. \bb For many applications of MCT, simple variants of the mutual invasibility condition  determine which invasion growth rates  need to be positive for coexistence~\citep{chesson-kuang-08,Chesson_2018}.  For example, for two prey species sharing a common resource and  a common predator, coexistence is determined by the invasion growth rates of each  prey species into the three species community determined by its absence~\citep{chesson-kuang-08}. However, for more complex models, it is less clear how invasion growth rates determine coexistence~\citep{barabas2018,Chesson_2018}. \ee



Here, we address these issues by introducing a mathematically precise notion of the  \emph{invasion graph} that describes all potential transitions between subcommunities via invasions (Section \ref{sec2}). For our models, we focus on a specific set of generalized ecological equations~\citep{jmb-18}. However, the proofs for our main results should hold for more general classes of models (e.g. reaction-diffusion equations~\citep{zhao-03}, discrete-time models~\citep{garay-hofbauer-03,roth-etal-17}) where conditions for permanence are determined by Morse-decompositions~\citep{conley-78} and invasion growth rates.  Under suitable assumptions about the ecological dynamics described in Section~\ref{subsec:assumptions}, we prove that if the invasion graph is acyclic and each  subcommunity is invadable, then the species coexist in the sense of robust permanence (Section \ref{sec3}). In fact, we show that invasibility only needs to be checked at  $-i$ communities, i.e., communities without species $i$ that are uninvadable by the remaining missing species. This sufficient condition always is a necessary condition for robust permanence. We show that our  assumptions in Section~\ref{subsec:assumptions} hold generically for Lotka-Volterra systems  and provide a simple algorithm for computing invasion graphs (Section \ref{sec4}). We also apply our results to models of two prey sharing a switching predator and a periodically-forced chemostat with three competing species (Section \ref{sec4}). We conclude with a discussion about open problems and future challenges (Section \ref{sec5}). 

\section{Ecological equations, invasion schemes, and invasion graphs~\label{sec2}}

\subsection{Models, assumptions, and permanence.} To cover ecological models accounting for species interactions, population structure (e.g. spatial, age, or genotypic), and auxiliary (e.g. seasonal forcing or abiotic variables), we consider a class of  ordinary differential equations introduced by \citet{jmb-18}. In these equations, there are $n$ interacting species with densities $x=(x_1,x_2,\dots,x_n)$ taking values in the non-negative cone of $\R^n$. In addition, there are auxiliary variables, $y=(y_1,y_2,\dots,y_m)$, taking values in a compact subset $Y$ of $\R^m.$ These auxiliary variables may describe internal feedbacks within species (e.g. genetic or spatial structure) or external feedbacks (e.g. environmental forcing or abiotic feedback variables). \bb  Let $z=(x,y)$ denote the state of the system. \ee In this framework, the dynamic of species $i$ is determined by its per-capita growth rate $f_i(z)$, while the dynamics of the auxiliary variables are determined by some multivariate function $g(z)=(g_1(z),\dots, g_m(z))$. Thus, the equations of motion are
\begin{equation}\label{eq:main}
\begin{aligned}
\frac{dx_i}{dt}&=x_i f_i(z)\quad\ \bb\mbox{ for }\ee i \in [n]:=\{1,2,\dots,n\}\\
\frac{dy}{dt}&=g(z) \quad \bb\mbox{where } z=(x,y).\ee
\end{aligned}
\end{equation}
The state-space for these dynamics is the non-negative orthant $\C=[0,\infty)^n \times Y$. 
The boundary of this orthant, $\C_0=\{(x,y)\in \C: \prod_i x_i=0\}$, corresponds to the extinction of one or more species. The interior of this orthant, $\C_+=\C\setminus \C_0$, corresponds to all species being present in the system. \bb For any initial condition $z\in \C$ at time $t=0$, we let $z.t$ denote the solution to \eqref{eq:main} at time $t$.\ee 

Our first two standing assumptions for the equations \eqref{eq:main} are:
\begin{description}
\item[A1] The functions $(x,y)\mapsto x_i f_i(x,y)$ and $(x,y)\mapsto g(x,y)$ are locally Lipschitz and, consequently, there exist unique solutions $z.t$ to \eqref{eq:main} for any initial condition $z=(x,y) \in \C$. 
\item[A2] The system is \emph{dissipative}: There exists a compact attractor $\Gamma\subset \C$ such that $\mbox{dist}(z.t,\Gamma)\to 0$ as $t\to +\infty$ for all $z=(x,y)\in\K$. Let $\Gamma_0=\Gamma\cap \C_0$.
\end{description}

We are interested in when the equations~\eqref{eq:main} are robustly permanent, i.e., species persist following large perturbations of their initial conditions \bb and \ee small perturbations of the equations governing their dynamics~\citep{hutson-schmitt-92,jde-00,garay-hofbauer-03,jmb-18}. \eqref{eq:main} is \emph{permanent} if there exists a compact set $K\subset \K \setminus \K_0$ such that for all $z=(x,y)\in \K\setminus \K_0$, $z.t\in K$ for $t$ sufficiently large. Namely, for all initial conditions supporting all species, the species densities are eventually uniformly bounded away from the extinction set $\K_0.$ \eqref{eq:main} is \emph{robustly permanent} if it remains permanent under perturbations of $f_i$ and $g$ that satisfy assumptions \textbf{A1}--\textbf{A2}. More precisely, given any compact neighborhood $V$ of $\Gamma$, there exists $\delta>0$ such that $\frac{dx_i}{dt}=x_i \tilde f_i (x,y), \frac{dy}{dt}=\tilde g(x,y)$ is permanent whenever $\| (f(z),g(z))-(\tilde f(z),\tilde g(z))\|\le \delta$ for all $z=(x,y)\in V$ and $(\tilde f,\tilde g)$ satisfy assumptions \textbf{A1}--\textbf{A2} with a global attractor $\tilde \Gamma$ contained in $V$.

\subsection{Invasion growth rates, schemes, and graphs}\label{subsec:assumptions}

To understand whether species coexist in the sense of permanence, we have to consider species per-capita growth rates when rare, i.e., invasion growth rates. These are best described using ergodic probability measures that correspond to indecomposible dynamical behaviors of the model. Recall, a Borel probability measure $\mu$ on $\K$ is \emph{invariant} for \eqref{eq:main} if $\int h(z)\mu(dz)=\int h(z.t)\mu(dz)$ for any continuous function $h:\K\to \R$ and any time $t$. Namely, the expected value of an ``observable'' $h$ does not change in time when the initial condition \bb is \ee chosen randomly with respect to $\mu$. An invariant probability measure $\mu$ is \emph{ergodic} if it can not be written as a non-trivial convex combination of two invariant probability measures, i.e., if $\mu=\alpha \mu_1+(1-\alpha)\mu_2$ for two distinct invariant measures $\mu_1,\mu_2$, then $\alpha=1$ or $\alpha=0$. The simplest example of an ergodic probability measure is a Dirac measure $\mu=\delta_{z^*}$ associated with an equilibrium $z^*$ of \eqref{eq:main}. This Dirac measure is characterized by $\int h(z)\mu(dz)=h(z^*)$ for every continuous function $h:\K\to\R$. Alternatively, if $z^*.t$ is a periodic solution with period $T$, then the measure $\mu$ defined by averaging along this periodic orbit is an ergodic measure\bb~\citep{mane-83,jde-00}\ee. Specifically, $\int h(z)\mu(dz)=\frac{1}{T}\int_0^T h(z^*.t)dt$ for all continuous $h:\K\to\R.$ \bb More generally, the ergodic theorem 
implies that for every ergodic measure $\mu$ there exists an initial condition $z^*$ such that $\mu$ is determined by averaging along the orbit of 
$z^*$, i.e., $\int h(z)\mu(dz)=\lim_{T\to\infty} \frac{1}{T}\int_0^T h(z^*.t)dt$ for all continuous $h:\K\to\R$. \ee

For any subset of species $S\subset [n]=\{1,2,\dots,n\}$, we define \[\F(S) :=\{(x,y)\in \K:x_j>0 \ \mbox{if and only if} \ j\in S\}\] to be \emph{the open face of $\K$ supporting the species in $S$}. For an ergodic measure $\mu$, we define \emph{the species support $S(\mu) \subset [n]$ of $\mu$} to be  the smallest subset of $[n]$  such that  $\mu(\F(S(\mu)) )= 1$. 

To understand whether a missing species $i\notin S(\mu)$ not supported by an ergodic measure $\mu$ can increase or not, we introduce the non-autonomous, linear differential equation
\[
\frac{d\widetilde x_i}{dt}=\widetilde x_i f_i(z.t)
\]
to approximate the dynamics of species $i$'s density $\widetilde x_i$ when introduced at small densities. The solution of this linear differential equation satisfies 
\[
\log  \frac{\widetilde x_i(t)}{\widetilde x_i(0)}=\int_0^t f_i(z.s)ds.
\]
Birkhoff's Ergodic Theorem implies that   
\[
\lim_{t\to\infty}\frac{1}{t}\int_0^t f_i(z.s)ds=\int f_i(z)\mu(dz) \ \mbox{ for $\mu$ almost every initial condition $z=(x,y)$}.
\]
Consequently, we define the \emph{invasion growth rate of species $i$ at $\mu$} as
\[
r_i(\mu) :=\int_{\C} f_i(z)\,\mu(dz).
\]
$r_i(\mu)$ is also defined for the resident species $i\in S(\mu)$ supported by $\mu$. In this case,  we don't interpret $r_i(\mu)$ as an invasion growth rate. Indeed, the following lemma shows that $r_i(\mu)=0$ in this case, i.e., resident species have a zero invasion growth rate.

\begin{lemma}
Let $\mu$ be an ergodic probability measure for \eqref{thm:main}. Then 
$r_i(\mu) = 0$ for all $i \in S(\mu)$. 
\end{lemma}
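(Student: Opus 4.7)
The plan is to track the growth of $\log x_i$ along an orbit in the support of $\mu$ and combine Birkhoff's ergodic theorem with Poincaré recurrence.

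First I would observe that, since $i\in S(\mu)$, the set $\{z=(x,y):x_i>0\}$ has full $\mu$-measure. The face $\{x_i=0\}$ is forward-invariant by uniqueness of solutions under \textbf{A1}, so for any $z$ with $x_i>0$ the orbit $z.t$ satisfies $x_i(t)>0$ for all $t\ge 0$. Along such an orbit, $\tfrac{d}{dt}\log x_i(t) = f_i(z.t)$, so
\[
\log x_i(t) - \log x_i(0) \;=\; \int_0^t f_i(z.s)\,ds.
\]
By the Birkhoff ergodic theorem applied to the continuous (and hence $\mu$-integrable, since $\mu$ is supported in the compact set $\Gamma$) function $f_i$, for $\mu$-a.e.\ $z$ the right-hand side divided by $t$ tends to $r_i(\mu)$. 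Consequently $\tfrac{1}{t}\log x_i(t) \to r_i(\mu)$ on a set of full $\mu$-measure.

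Next I would use Poincaré recurrence. Because $\mu$ is an invariant probability measure for the flow, $\mu$-a.e.\ $z$ is recurrent: there exists a sequence $t_k\to\infty$ with $z.t_k\to z$. Intersecting this full-measure set with the full-measure sets from the previous step and from $\{x_i>0\}$, I pick $z$ satisfying all three properties. Continuity of the projection $z\mapsto x_i$ gives $x_i(t_k)\to x_i(0)\in(0,\infty)$, so $\log x_i(t_k)$ remains bounded, and hence $\tfrac{1}{t_k}\log x_i(t_k)\to 0$. Since the full limit $\tfrac{1}{t}\log x_i(t)\to r_i(\mu)$ already exists, every subsequential limit agrees with it, forcing $r_i(\mu)=0$.

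The only delicate point is that $\log x_i$ is unbounded below, so one cannot naively use invariance of $\mu$ together with $\int \log x_i\, d\mu = \int \log x_i(t)\, d\mu$ to conclude $\int f_i\,d\mu = 0$ — integrability of $\log x_i$ would have to be established first. The recurrence argument sidesteps this entirely: it produces, on a single orbit of full-measure type, a sequence of times at which $\log x_i$ is controlled by orbital continuity rather than by integrability. I expect this to be the main (modest) technical step; everything else is bookkeeping with the flow-version of Birkhoff and Poincaré recurrence guaranteed by \textbf{A1}--\textbf{A2}.
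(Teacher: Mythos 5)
Your argument is correct and is essentially the paper's own proof: both combine the identity $\log x_i(t)-\log x_i(0)=\int_0^t f_i(z.s)\,ds$ with Birkhoff's ergodic theorem and Poincar\'e recurrence to find a full-measure orbit along which $\tfrac{1}{t_k}\log x_i(t_k)\to 0$. The only cosmetic difference is that you invoke pointwise recurrence ($z.t_k\to z$) where the paper uses returns to a fixed set $V$ with compact closure in $\F(S(\mu))$; both give the needed two-sided bound on $x_i(t_k)$.
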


The proof of this lemma follows from the argument given for  models without auxiliary variables $y$ found in \citep[Lemma 5.1]{jde-00}. 

\begin{proof} Let $i\in S(\mu)$ be given. Let $\pi_i:\K\to\R$ be the projection onto the $i$-th component of the $x$ coordinate, i.e., $\pi_i(z)=x_i$ when $z=(x,y)\in \K$.  Since $\mu(\F(S(\mu))=1$, Birkhoff's Ergodic Theorem implies that there exists an invariant Borel set $U\subseteq  \F(S(\mu))$ such that $\mu(U)=1$ and
\begin{equation}\label{eq:bet}
\lim_{t\to\infty}\frac{1}{t}\int_0^t f_i(z.s)ds=r_i(\mu)
\end{equation}
whenever $z\in U$. Choose an open set $V$ such
that its closure $\overline {V}$ is contained in $\F(S(\mu))$, $\overline{V}$ is compact, and $\mu(V\cap
U)>0$. By the Poincar\'{e} recurrence theorem, there exists $z\in
V\cap U$ and an increasing sequence of real numbers $t_k\uparrow \infty$ such that $z.t_k\in V$ for all $k\ge 1$. Since is $\overline{V}$ is compact, there exists a
$\delta>0$ such that
    \begin{equation}\label{eq:bounds}
    1/\delta\le \pi_i(z.t_k) \le \delta
    \end{equation}
for all $k$. As 
$\log \frac{\pi_i(z.t)}{\pi_i(z)}=\int_0^t f_i(z.s)ds$,  \eqref{eq:bet} and \eqref{eq:bounds}  imply that
    \[
    r_i(\mu) = \lim_{t\to\infty}\frac{1}{t}\int_0^t f_i(z.s)ds= 
    \lim_{k\to\infty} \frac{1}{t_k}\log \frac{\pi_i(z.t_k)}{\pi_i(z)} =0.
   \]
\end{proof}

We make the following additional standing assumption: 
\begin{description}
	\item[A3a] For each ergodic invariant Borel probability measure $\mu$ supported by $\Gamma_0$, $r_j(\mu)\neq 0$ for all $j\notin S(\mu)$, and 
	\item[A3b] $\sgn r_j(\mu) = \sgn r_j(\nu)$ for any two ergodic measures $\mu, \nu$ with $S(\mu) = S(\nu)$, and all $j$.
\end{description}

Assumption \textbf{A3a} requires the invasion growth rates $r_i(\mu)$ are non-zero for species not supported by $\mu$. This assumption holds typically for dissipative Lotka-Volterra systems or systems with a finite number of ergodic measures. Due to their time averaging property, assumption \textbf{A3b} holds for all Lotka-Volterra systems and replicator equations~\citep{hofbauer-sigmund-98} and certain types of periodically forced versions of these equations~\citep{jmb-18}. This assumption automatically holds when each face supports at most one invariant probability measure (e.g. there is a unique equilibrium, periodic orbit, or quasi-periodic motion in a given face). Sometimes this sign parity also can be verified when the per-capita growth functions $f_i$ exhibit the right convexity properties (e.g. \citep{kon-04,jde-04}). \bb For non-Lotka Volterra systems, it is possible for the per-capita growth rates of a missing species to have opposite signs at different ergodic measures. In this case, assumption \text{A3b} fails.  For example, this failure arises in models of two predator species competing for a single prey species~\citep{mcgehee-armstrong-77}. If one predator has a type II functional response, then the predator-prey subsystem may simultaneously have an unstable equilibrium (defining one ergodic measure) and a stable limit cycle (defining another ergodic measure). \citet{mcgehee-armstrong-77} showed that the invasion growth rates of the other predator species may be positive at the stable limit cycle but negative at the unstable equilibrium. A similar phenomena arises in models  of  two prey species sharing a common predator~\citep{jde-04}. \ee

In light of assumption \textbf{A3}, we can uniquely define 
\[
r_i(S) 	= \sgn r_i(\mu) \mbox{ if }S = S(\mu).
\]
for each subset $S\subset [n]$ of species.  Let \emph{$\S$ be the set of all subcommunities}: all proper subsets $S$ of $[n]$ such that $S = S(\mu)$ for some ergodic measure $\mu$. For $S\in \S$, there are at most $n-1$ species and, consequently, $\mu(\C_0)=1$ for any ergodic measure $\mu$ supported by $S$. Furthermore, $\S$ isn't empty as it always contains the empty community $\emptyset\subset [n]$. Let $|\S|$ be the number of elements in $\S$, i.e., the number of subcommunities.

Now, we introduce our two main definitions. We define the \textbf{invasion scheme $\IS$} to be the table of the signs of invasion growth rates $\{(r_i(S))_{ i\in [n]}: S\in \S\}$. When viewed as a $|\S| \times n$ matrix, the invasion scheme is the signed version of the characteristic matrix introduced in \citep{hofbauer-94}. The rows of this matrix correspond to the different subcommunities while the columns correspond to the different species. We define the \textbf{invasion graph} $\IG$ as the directed graph with vertex set $\S$ and  a directed edge from $S\in \S$ to $T\in \S$ if 
 
\begin{itemize}
    \item $S \not= T$,
	\item $r_j(S)>0$ for all $j \in T\setminus S$, and 
	\item $r_i(T)<0$ for all  $i \in S \setminus T$.
\end{itemize}
The first condition implies that there are no self-loops in the invasion graph. The second condition implies that all the species in $T$ missing from $S$ can invade $S$. The third condition \bb allows for the loss of species from $S$ that are not in $T$ and ensures that these lost species can not invade $T$. \ee One can view the invasion graph as describing all potential transitions from one subcommunity $S\in \S$ to another subcommunity $T\in \S$ due to invasions of missing species.

\begin{remark} If $z\in \K$ is such that its $\alpha$-limit set lies in $\F(S)$ for a proper subset $S\subset [n]$ and its $\omega$-limit lies in $\F(T)$ for another proper set $T\neq S$, then there is a directed edge from $S$ to $T$. The proof follows from the arguments presented in \ref{sec:proof1}.\end{remark}

\bb\begin{remark} As we focus on determining whether or not $[n]$ is permanent, the invasion graph $\IG$ doesn't include $[n]$. However, for visualization purposes, we include $[n]$ in our plots whenever $[n]$ is permanent (see, e.g., Figure~\ref{fig:LV})\end{remark}\ee

\section{Main Results~\label{sec3}}
The following theorem  partially answers our main question,``when is knowing only qualitative information of the invasion growth rates $r_i(\mu)$ (namely, their sign) sufficient for determining robust permanence?'' Recall, that a directed graph is acyclic if there is no path of directed edges starting and ending at the same vertex, i.e., there are no cycles.  

\begin{thm}\label{thm:main}
Assume that \textbf{A1}--\textbf{A3} hold and $\IG$ is acyclic. Then  \eqref{eq:main} is robustly permanent if for each $S \in \S$
there is  $i$ such that $r_i(S)>0$ i.e., each subcommunity is invadable.
\end{thm}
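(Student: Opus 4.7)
The plan is to invoke the average-Lyapunov-function criterion for robust permanence from \citep{jde-00,garay-hofbauer-03}: under \textbf{A1}--\textbf{A2}, it is sufficient to exhibit positive constants $p_1,\dots,p_n$ such that
\[
\sum_{i=1}^n p_i\, r_i(\mu) > 0 \qquad \text{for every ergodic invariant measure $\mu$ supported on $\Gamma_0$.}
\]
The companion Lyapunov function is $P(z)=\prod_i x_i^{p_i}$, whose orbit-averaged logarithmic derivative equals $\sum_i p_i r_i(\mu)$ via Birkhoff's ergodic theorem. Robustness is automatic because $\mu\mapsto\sum_i p_i r_i(\mu)$ is weak-$*$ continuous and the ergodic measures of a small $C^0$-perturbation of $(f,g)$ cluster in the weak-$*$ sense on those of the unperturbed system. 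The proof therefore reduces to producing such weights from acyclicity of $\IG$ and invadability of every $S\in\S$.

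By the preceding lemma, $r_i(\mu)=0$ whenever $i\in S(\mu)$, so only invasion growth rates of missing species contribute. By \textbf{A3}, these have signs determined by $S(\mu)$, recorded in the invasion scheme $\IS$. Weak-$*$ compactness of invariant probability measures on $\Gamma_0$ together with continuity of each $f_i$ lets me replace the continuum of ergodic measures by the finite list of sign rows $\{(r_i(S))_{i\in[n]}\}_{S\in\S}$, at the cost of demanding a uniform positive margin in the finite system of inequalities $\sum_i p_i r_i(S) > 0$, $S\in\S$.

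I would then prove feasibility of this finite system by a Farkas'-lemma argument leveraging acyclicity of $\IG$. Infeasibility would produce nonnegative multipliers $q_S\ge 0$ on $\S$, not all zero, with $\sum_S q_S\, r_i(S)\le 0$ for every $i$. Setting $\mathcal{T}=\{S\in\S : q_S>0\}$, acyclicity allows the selection of a $\IG$-maximal $S^\star\in\mathcal{T}$ from which no $\IG$-edge leads into $\mathcal{T}$; invadability then gives some $i^\star\notin S^\star$ with $r_{i^\star}(S^\star)>0$, producing a strictly positive contribution $q_{S^\star} r_{i^\star}(S^\star)$ to $\sum_S q_S r_{i^\star}(S)$. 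The contradiction is obtained by a book-keeping argument: $\IG$-maximality of $S^\star$ constrains the signs $r_{i^\star}(T)$ for the other $T\in\mathcal{T}$ via the defining clauses of $\IG$-edges (a positivity clause on invading species and a negativity clause on lost species), and one concludes that $\sum_S q_S r_{i^\star}(S)$ cannot in fact be nonpositive. An equivalent constructive alternative is a staged induction along a topological ordering of $\IG$, assembling $(p_i)$ in stages so that later boosts of individual weights do not destroy earlier inequalities; acyclicity is what makes this termination possible.

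The principal obstacle is exactly this combinatorial step: turning the purely qualitative acyclicity of $\IG$ into the quantitative existence of a uniform positive weight vector. The rock-paper-scissor example of \citet{may-leonard-75} cited in the introduction shows that discarding acyclicity makes invadability alone insufficient, so the proof must use both clauses of the $\IG$-edge definition in an essential way. A secondary technicality is lifting the sign-level weight estimates to a uniform positive lower bound for $\sum_i p_i r_i(\mu)$ over the full weak-$*$ compact set of ergodic boundary measures, and verifying that this bound survives the $C^0$-perturbations of $(f,g)$ allowed in the definition of robust permanence.
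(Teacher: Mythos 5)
There is a genuine gap, and it is precisely the step you flag as the ``principal obstacle'': the combinatorial derivation of the weights $p_1,\dots,p_n$ from acyclicity plus invadability is not merely hard, it is impossible in general. Your argument is purely sign-based --- you reduce to the finite system $\sum_i p_i r_i(S)>0$, $S\in\S$, and try to establish feasibility from the sign pattern $\IS$ alone via Farkas' lemma. But Proposition~\ref{prop:IS} of the paper shows that the sign pattern guarantees feasibility of this system (for every matrix realizing that pattern) \emph{exactly} when $\IS$ is sequentially permanent, and sequential permanence is strictly stronger than the hypotheses of Theorem~\ref{thm:main}. The switching-predator model of Section~\ref{sec:switch-pred} with bistable prey satisfies all hypotheses of the theorem (acyclic $\IG$, every $S\in\S$ invadable) yet is explicitly not sequentially permanent: no column of $\IS$ is nonnegative, so your Farkas alternative produces multipliers $q_S$ (e.g.\ supported on the singleton communities) that cannot be ruled out by sign information. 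Your proposed contradiction at the $\IG$-maximal $S^\star$ also does not close: maximality of $S^\star$ in $\mathcal{T}$ constrains edges out of $S^\star$, but it gives you no control over the sign of $r_{i^\star}(T)$ for the other $T\in\mathcal{T}$, since the absence of an edge $S^\star\to T$ does not force $r_{i^\star}(T)\ge 0$. This is the content of the paper's remark that the average Lyapunov condition \textbf{H} of \citet{hofbauer-81}, when fed only qualitative information, covers only the nested (sequentially permanent) acyclic graphs; Corollary~\ref{cor:IS} is the correct scope of your strategy, not Theorem~\ref{thm:main}.

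The paper's proof takes a different, Morse-theoretic route that avoids any global weight vector. The key lemma shows that a connecting orbit in $\Gamma_0$ from $\F(S)$ to $\F(T)$ forces a directed edge $S\to T$ in $\IG$ (via Krylov--Bogolyubov empirical measures along the backward and forward orbit, the ergodic decomposition, and \textbf{A3}). Acyclicity then lets one build, by induction on $|I|$, isolated compact invariant sets $M_I\subset\F(I)$ forming a Morse decomposition of $\Gamma_0$ indexed by $\S$. Invadability is used only locally: for each $I$ one needs a single species $i$ with $r_i(I)>0$, which yields a uniform finite-time average $\frac{1}{\tau}\int_0^\tau f_i(z.t)\,dt\ge\alpha>0$ on $M_I$, i.e.\ each Morse set is unsaturated. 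Robust permanence then follows from the Morse-decomposition permanence criterion of \citet{jmb-18} (in the tradition of \citealp{butler-freedman-waltman-86,garay-89,hofbauer-so-89,jde-00,garay-hofbauer-03}). Because positivity is only required of one invasion rate per Morse set, with no consistency across sets, this argument covers the acyclic-but-not-sequentially-permanent cases that defeat the average Lyapunov function. To repair your proof you would need to replace the global Lyapunov function $\prod_i x_i^{p_i}$ by this local, Morse-set-by-Morse-set analysis.
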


\bb\begin{remark} If $(f,g)$ are twice continuously differentiable and there exists $S\in \S$ such that $r_i(S)<0$ for all $i\in [n]\setminus S$, then Pesin's Stable Manifold Theorem (see, e.g., \citet{pugh-shub-89}) implies there exists $z\in \C_+ $ such that $\omega (z)\subset \C_0$ and, consequently, \eqref{eq:main} is not permanent.\end{remark}\ee

The proof of Theorem~\ref{thm:main} is given in Appendix A. The  idea of the proof is as follows. The invasion graph being acyclic allows us to construct a Morse decomposition~\citep{conley-78}  of the flow on  the extinction set $\C_0$. Each component of the Morse decomposition corresponds to a subcommunity in the invasion graph. The invasibility conditions ensure that the stable set of each component of the Morse decomposition doesn't intersect the non-extinction set $\C\setminus \C_0$~\citep{jde-00,garay-hofbauer-03,jmb-18}. Then one can apply classic results about permanence~\citep{butler-freedman-waltman-86,garay-89,hofbauer-so-89}.

We derive two useful corollaries from this theorem. First, we show that if the invasion graph is acyclic, then checking for robust permanence  only requires checking the invasibility conditions for a special subset of  subcommunities. Specifically, given a species $i$, a community $S\subset[n]\setminus\{i\}$ is a \emph{$-i$ community} if $r_j(S)\le 0$ for all $j\neq i$. \bb In words, a $-i$ community is a community $S$ that doesn't include species $i$ and that can not be invaded by any of the missing species $j\notin S$ except possibly species $i$. \ee

\begin{cor}\label{cor1}
Assume that \textbf{A1}--\textbf{A3} hold and $\IG$ is acyclic. Then  \eqref{eq:main} is robustly permanent if $r_i(S)>0$ for each $-i$ community $S$ and $i\in [n]$. 
\end{cor}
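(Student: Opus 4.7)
The plan is to derive Corollary 1 directly from Theorem 1 by showing that the hypothesis of the corollary already implies the invasibility condition of Theorem 1, namely that for every $S \in \S$ there is some $i$ with $r_i(S) > 0$. Since acyclicity of $\IG$ is given, once invasibility is established for every subcommunity, Theorem 1 delivers robust permanence with no further work.

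To verify invasibility, fix an arbitrary $S \in \S$ and argue by contradiction. Since elements of $\S$ are proper subsets of $[n]$, we may choose some $i \in [n] \setminus S$. Suppose $S$ is not invadable, i.e.\ $r_j(S) \le 0$ for every $j \notin S$. Combining this with the preceding lemma, which gives $r_j(S) = 0$ for every $j \in S$, we conclude that $r_j(S) \le 0$ for every $j \in [n] \setminus \{i\}$ (splitting according as $j \in S$ or $j \notin S$, $j\neq i$). Together with $S \subset [n] \setminus \{i\}$, this is exactly the defining condition of a $-i$ community. The hypothesis of the corollary then forces $r_i(S) > 0$, which contradicts our assumption since $i \notin S$. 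Therefore some $j \notin S$ must satisfy $r_j(S) > 0$, so $S$ is invadable. Applying Theorem~\ref{thm:main} completes the proof.

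There is no real obstacle here; the content of the corollary is essentially bookkeeping on the definitions, with the only substantive input being the lemma guaranteeing $r_j(\mu)=0$ for resident species. The one subtle point worth stating carefully is that for $j \in S$, the inequality $r_j(S) \le 0$ is automatic, so the condition ``$r_j(S) \le 0$ for all $j \neq i$'' in the definition of a $-i$ community is really a condition only on the missing species $j \in [n] \setminus (S \cup \{i\})$.
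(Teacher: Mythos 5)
Your proof is correct and follows essentially the same route as the paper's: both reduce the corollary to the invasibility hypothesis of Theorem~\ref{thm:main} by observing that any subcommunity $S$ with no invading missing species is automatically a $-i$ community for each $i\notin S$ (using that residents have zero growth rate), whence the hypothesis supplies $r_i(S)>0$. The only cosmetic difference is that you phrase the dichotomy as a contradiction while the paper states it as a direct case split via assumption \textbf{A3}.
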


\bb 
\begin{proof} Let $S\in \S$ and $i\in [n]\setminus S$. By assumption \textbf{A3}, either $r_j(S)<0$ for all $j\in [n]\setminus S\cup\{i\}$ (i.e. $S$ is a $-i$ community) or $r_j(S)>0$ for some $j\in [n]\setminus S\cup\{i\}$ (i.e. $S$ is not a $-i$ community). As, by assumption, $r_i(S)>0$ for the $-i$ communities, applying Theorem~\ref{thm:main} completes the proof. 
\end{proof}
\ee

\begin{remark}\label{remark2} In general, $-i$  communities correspond to subcommunities $S$ for which 
there is an initial condition $z=(x,y)$ such that (i) $x_i=0$ and $x_j>0$ for $j\neq i$, and (ii) the $\omega$-limit set of $z$ intersects $\F(S)$. Indeed, if there is a community $S \subset[n]\setminus\{i\}$ and initial condition $z$ satisfying (i) and (ii), then the proof of Lemma 2 in Appendix A implies $S$ is a $-i$ community. Conversely, if $S$ is a $-i$ community and the functions $f_i,g$ are twice continuously differentiable, then Pesin's Stable Manifold Theorem (see, e.g., \citet{pugh-shub-89}) implies there exists an initial condition $z$ satisfying (i) and (ii).
\end{remark}

Our second corollary concerns \bb average Lyapunov function condition for permanence due to \citet{hofbauer-81}. This sufficient condition is  
\begin{description}
\item [H] There exist positive constants $p_1,\dots,p_n$ such that $\sum_i p_i r_i(\mu)>0$ for all ergodic measures $\mu$ with $S(\mu)\in \S$.
\end{description}
One can ask ``when does knowing only the signs of the $r_i(\mu)$ ensure that condition \textbf{H} holds?'' To answer this question, \ee  we say the invasion scheme $\IS$ is \emph{sequentially permanent} if there is an ordering of the $n$ species, say $\ell_1,\ell_2,\dots,\ell_n$, such that column $\ell_i$ of $\IS_i$ has only non-negative entries where $\IS_1=\IS$ and $\IS_i$ for $i\ge 2$ is defined by removing the rows of $\IS_{i-1}$ where species $\ell_{i-1}$ has a positive per-capita growth rate. As sequential permanence  implies that the invasion graph is acyclic \bb and $\max_i r_i(S)>0$ for all $S\in \S$, we get the following result:

\begin{cor}\label{cor:IS}Assume that \textbf{A1}--\textbf{A3} hold.
If $\IS$ is sequentially permanent, then \eqref{eq:main} is robustly permanent. 
\end{cor}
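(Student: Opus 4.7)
The plan is to verify the two hypotheses of Theorem~\ref{thm:main} under sequential permanence and then invoke that theorem directly; namely, to show that (i) $\max_i r_i(S)>0$ for every $S\in\S$ and (ii) $\IG$ is acyclic.

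First, I would reformulate the hypothesis by inducting on $i$: using the preceding lemma and \textbf{A3a} (which together force $r_{\ell_j}(S)=0\iff\ell_j\in S$), the rows of $\IS_i$ are exactly those $S\in\S$ with $\{\ell_1,\dots,\ell_{i-1}\}\subseteq S$. Sequential permanence is thus equivalent to: for every $S\in\S$, writing $k(S):=\min\{i:\ell_i\notin S\}$ (well-defined since $S\subsetneq[n]$), one has $r_{\ell_{k(S)}}(S)>0$. The invadability condition is then immediate, since $r_{\ell_{k(S)}}(S)>0$ yields $\max_i r_i(S)>0$.

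For acyclicity, I would equip $\S$ with the lexicographic order $\prec$ associated to the characteristic vectors $\chi(S):=(\mathbf{1}_{\ell_1\in S},\dots,\mathbf{1}_{\ell_n\in S})\in\{0,1\}^n$ and argue that every directed edge $S\to T$ of $\IG$ satisfies $\chi(S)\prec\chi(T)$, which precludes cycles. Let $j$ be the smallest index at which $\chi(S)$ and $\chi(T)$ differ. If $\ell_j\in T\setminus S$ we are done; otherwise $\ell_j\in S\setminus T$, and the third edge condition of $\IG$ gives $r_{\ell_j}(T)<0$. Since $\ell_1,\dots,\ell_{j-1}$ have identical status in $S$ and $T$, one expects $k(T)=j$, and then the reformulated sequential permanence forces $r_{\ell_j}(T)>0$, a contradiction. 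Theorem~\ref{thm:main} then yields robust permanence.

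The hard part will be closing this lex argument when $\ell_1,\dots,\ell_{j-1}$ are absent from \emph{both} $S$ and $T$, so that the identity $k(T)=j$ is not automatic from the minimality of $j$. Two routes appear viable: either iteratively peel off the missing early species and descend to a lower index at which sequential permanence applies, or bypass Theorem~\ref{thm:main} altogether by choosing weights $p_{\ell_i}:=\varepsilon^{i-1}$ for small $\varepsilon>0$ and verifying condition~\textbf{H} directly, which in turn hinges on a uniform lower bound $r_{\ell_{k(S)}}(\mu)\ge c>0$ obtained from weak-$*$ compactness of invariant measures on $\overline{\F(S)}\cap\Gamma_0$ together with \textbf{A3a}, ruling out sequences of ergodic measures whose invasion growth rate at $\ell_{k(S)}$ vanishes.
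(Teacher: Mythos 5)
Your strategy is the paper's own: the paper justifies this corollary with the single sentence that sequential permanence implies acyclicity of $\IG$ together with $\max_i r_i(S)>0$ for all $S\in\S$, and then applies Theorem~\ref{thm:main}. Your treatment of the invadability half is correct and complete: the rows of $\IS_i$ are exactly the $S\in\S$ with $\{\ell_1,\dots,\ell_{i-1}\}\subseteq S$, so with $k(S)=\min\{i:\ell_i\notin S\}$ sequential permanence amounts to $r_{\ell_{k(S)}}(S)>0$ for every $S\in\S$, and invadability follows.

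The gap you flagged in the acyclicity half is genuine and cannot be closed: lexicographic monotonicity of $\chi$ along edges is false, and acyclicity itself can fail when the early species are absent from all communities involved. Concrete obstruction: let $n=4$ with ordering $1,2,3,4$, suppose $\{2\},\{3\},\{4\}\in\S$ carry a rock--paper--scissors sign pattern ($r_3(\{2\})>0>r_4(\{2\})$, $r_4(\{3\})>0>r_2(\{3\})$, $r_2(\{4\})>0>r_3(\{4\})$), and $r_1(S)>0$ for every $S\not\ni 1$. This scheme is sequentially permanent --- the rows $\{2\},\{3\},\{4\}$ are deleted in passing from $\IS_1$ to $\IS_2$, so columns $2,3,4$ are never inspected on them --- yet $\{2\}\to\{3\}\to\{4\}\to\{2\}$ is a cycle in $\IG$. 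Hence ``peeling off the missing early species'' cannot rescue the lex argument, and Theorem~\ref{thm:main} is not directly applicable; this also shows that the paper's own one-line justification needs repair. (What is implicitly true is that $k(S)\le k(T)$ along every edge, so $k$ is constant on any cycle; every community on a cycle then shares the common invader $\ell_k$ with $r_{\ell_k}>0$, and the whole strongly connected component can be collapsed into a single unsaturated Morse set.) Your fallback --- geometric weights $p_{\ell_i}=\eps^{i-1}$ verifying condition \textbf{H} and bypassing Theorem~\ref{thm:main} entirely --- is the route that actually works and is essentially the computation in the proof of Proposition~\ref{prop:IS}, with the weights written in the reverse direction. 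The remaining care needed there is the uniform lower bound $r_{\ell_{k(S)}}(\mu)\ge c>0$ over ergodic $\mu$ with $S(\mu)=S$: weak$^*$ limits of such measures may charge smaller faces $\F(S')$ with $S'\subsetneq S$ on which $r_{\ell_{k(S)}}$ can have the opposite sign, so the compactness argument must be run for the fixed vector $p$ against all invariant measures on $\Gamma_0$ at once rather than one coordinate at a time.
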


\bb If $\IG$ is sequentially permanent, then the invasion schemes of \eqref{eq:main} restricted to each of the communities $\{\ell_1\}, \{\ell_1,\ell_2\}, \dots, [n]$ are also sequentially permanent. Hence, each of the communities in this sequence is also permanent. 
Such a sequential way to prove permanence  has been used by \citet[p.~877 ff]{hofbauer-kon-saito-08}. A simple example of a system which is not sequentially permanent but has an acyclic
invasion graph is given in section \ref{sec:switch-pred}.\ee

\bb To see how sequential permanence relates to condition \textbf{H}, consider the special case where $r_i(\mu)=r_i(\nu)$ for ergodic measures satisfying $S(\mu)=S(\nu)$ e.g. a Lotka Volterra model or  models where each face supports at most one ergodic measure. For each $S\in \S$ and $i\in [n]$, let $C_i(S)=r_i(\mu)$ where $S(\mu)=S$. This characteristic matrix $C=\{C_i(S)\}_{S,i}$  has the same sign pattern as the invasion scheme $\IS=\{r_i(S)\}_{S,i}$. For a vector $p=(p_1,\dots,p_n)$, we write $p\gg 0$ if $p_i>0$ for all $i$. Condition \textbf{H} is equivalent to $Cp\gg 0$ for some $p=(p_1,\dots,p_n)\gg0.$ The following algebraic proposition shows that \textbf{H} is guaranteed by the sign structure of the $C$ if and only if $\IS$ is sequentially permanent.  The proof  is  in ~\ref{sec:proof2}.

\begin{prop}\label{prop:IS}
Let $\IS$ be an invasion scheme. Then $\IS$ is sequentially permanent if and only if for every matrix $C$ with $\sgn(C)=\IS$, $Cp\gg 0$ for some $p=(p_1,\dots,p_n)\gg 0.$
\end{prop}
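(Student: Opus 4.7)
The plan is to prove the two directions separately: the forward direction (sequential permanence $\Rightarrow$ universal solvability) by an explicit choice of weights depending on $C$, and the converse by an induction on $n$ that invokes Gordan's alternative to convert universal solvability into the absence of a dual certificate $q\geq 0$, $q\neq 0$, with $C^Tq\leq 0$.

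For the forward direction, fix an ordering $\ell_1,\dots,\ell_n$ witnessing sequential permanence and, given $C$ with $\sgn(C)=\IS$, take $p_{\ell_i}=M^{-i}$ for a large parameter $M$. The key observation is: if $j(S)$ denotes the smallest index with $r_{\ell_{j(S)}}(S)>0$, then for each $k<j(S)$ one must have $\ell_k\in S$. Indeed, row $S$ still lies in $\IS_k$ (not yet removed), so sequential permanence gives $r_{\ell_k}(S)\geq 0$, while the definition of $j(S)$ gives $r_{\ell_k}(S)\leq 0$; hence $r_{\ell_k}(S)=0$, and by assumption \textbf{A3a} this means $\ell_k\in S$, i.e.\ $C_{\ell_k}(S)=0$. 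Consequently every term with $k<j(S)$ in $(Cp)_S=\sum_k C_{\ell_k}(S)M^{-k}$ vanishes, so the leading nonzero term $C_{\ell_{j(S)}}(S)M^{-j(S)}$ is strictly positive and dominates the $O(M^{-j(S)-1})$ tail; choosing $M$ larger than a constant depending on $C$ but not on $S$ (valid since $|\S|$ is finite) yields $Cp\gg 0$.

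For the converse, the task becomes: whenever $\IS$ is not sequentially permanent, construct $C$ with $\sgn(C)=\IS$ together with a dual certificate $q$. Induct on $n$; the base case $n=1$ is immediate. For the inductive step, split on whether $\IS$ has any componentwise nonnegative column. (A) If no column is nonnegative, choose for each $\ell\in[n]$ a row $S_\ell\in\S$ with $r_\ell(S_\ell)<0$, set $|C_\ell(S_\ell)|=M$ large and all other entries of magnitude $1$ (with signs dictated by $\IS$), and take $q_S=|\{\ell:S_\ell=S\}|$; then $(C^Tq)_i=\sum_\ell C_i(S_\ell)\leq -M+(n-1)$, which is negative once $M\geq n$. (B) If some column $\ell$ is nonnegative, then the failure of sequential permanence for $\IS$ forces the same failure for the reduced scheme $\IS'_\ell$ on $[n]\setminus\{\ell\}$ (obtained by deleting rows with $r_\ell>0$ and then column $\ell$)---otherwise prepending $\ell$ to an ordering of $\IS'_\ell$ would witness sequential permanence of $\IS$. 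Apply the inductive hypothesis to get $(C',q')$, extend $C'$ to $C$ arbitrarily on deleted rows (respecting $\sgn(C)=\IS$), and extend $q'$ to $q$ by zero on deleted rows. Since $q$ is supported only on rows containing $\ell$ (where $C_\ell=0$), $(C^Tq)_\ell=0$; for $i\neq\ell$, $(C^Tq)_i=((C')^Tq')_i\leq 0$ by induction.

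The main obstacle is the inductive step in the converse, specifically the structural check that the subscheme $\IS'_\ell$ is itself a legitimate invasion scheme on $n-1$ species: its rows, after identifying $S$ with $S\setminus\{\ell\}$, must be proper subsets of $[n]\setminus\{\ell\}$ (which holds because $\S$ consists only of proper subsets of $[n]$, so $S\neq[n]$), and the sign pattern must retain the required zero-exactly-on-members structure. Once this verification is in hand, the case split according to whether a nonnegative column exists handles both the ``stuck at step $1$'' and the ``stuck at a later step'' failure modes uniformly.
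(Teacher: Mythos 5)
Your proof is correct and follows essentially the same strategy as the paper's: hierarchically scaled weights along the witnessing order for the forward direction, and, for the converse, a matrix $C$ with one dominant negative entry per column of the "stuck" subscheme whose bad rows are summed against a nonnegative combination to produce a contradiction. Your packaging differs only cosmetically (geometric weights $M^{-i}$ instead of inductively chosen $p_k$, and an explicit Gordan-style dual certificate $q$ with induction on $n$ in place of the paper's direct restriction to $\IS_{k+1}$ and summation of the $n-k$ inequalities), so no further comparison is needed.
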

\ee

\section{Applications~\label{sec4}}

To illustrate the use of our results, we first describe how to verify them for Lotka-Volterra systems and also illustrate their application to two non Lotka-Volterra systems: a model of  competing prey sharing a switching predator, and a periodically-forced model of three competing species. For the first two applications, the conditions of Theorem~\ref{thm:main} are evaluated analytically, while in third application, we verify the conditions numerically. 

\begin{figure}
\includegraphics[width=0.4\textwidth]{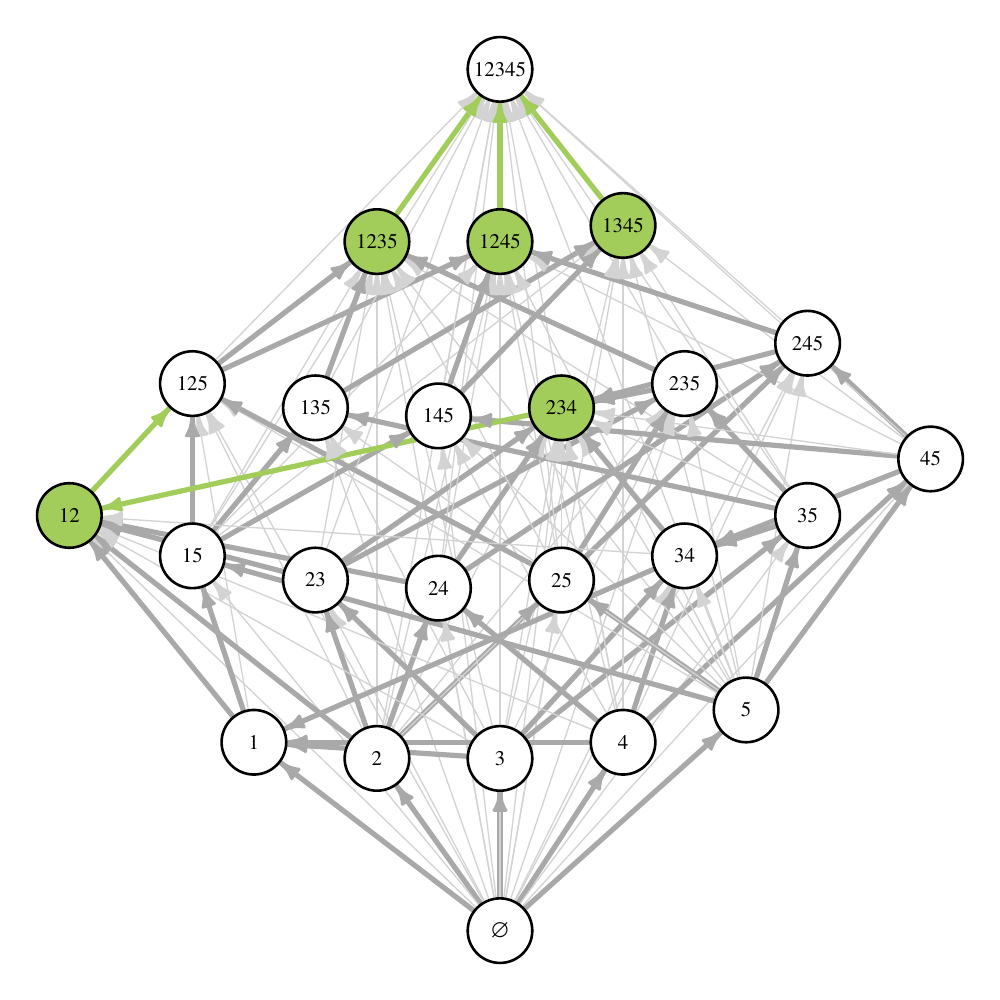}\includegraphics[width=0.4\textwidth]{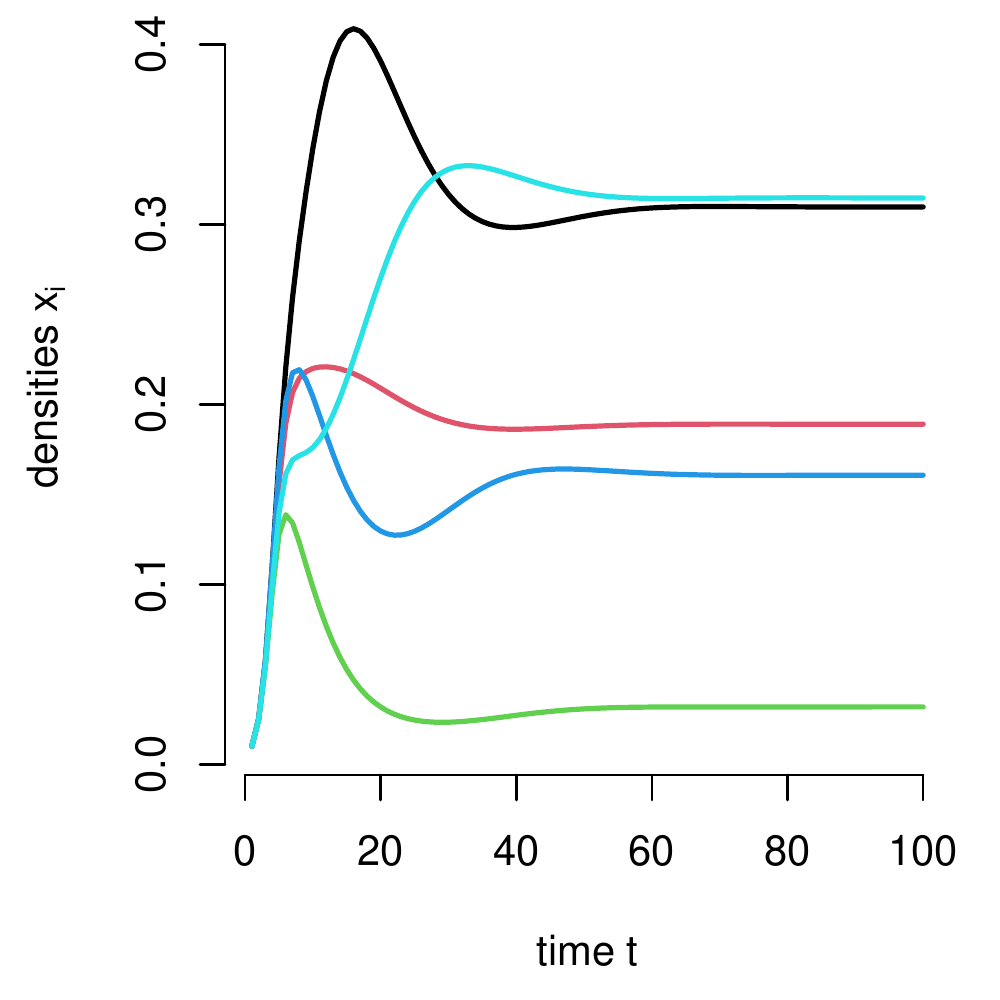}
\includegraphics[width=0.4\textwidth]{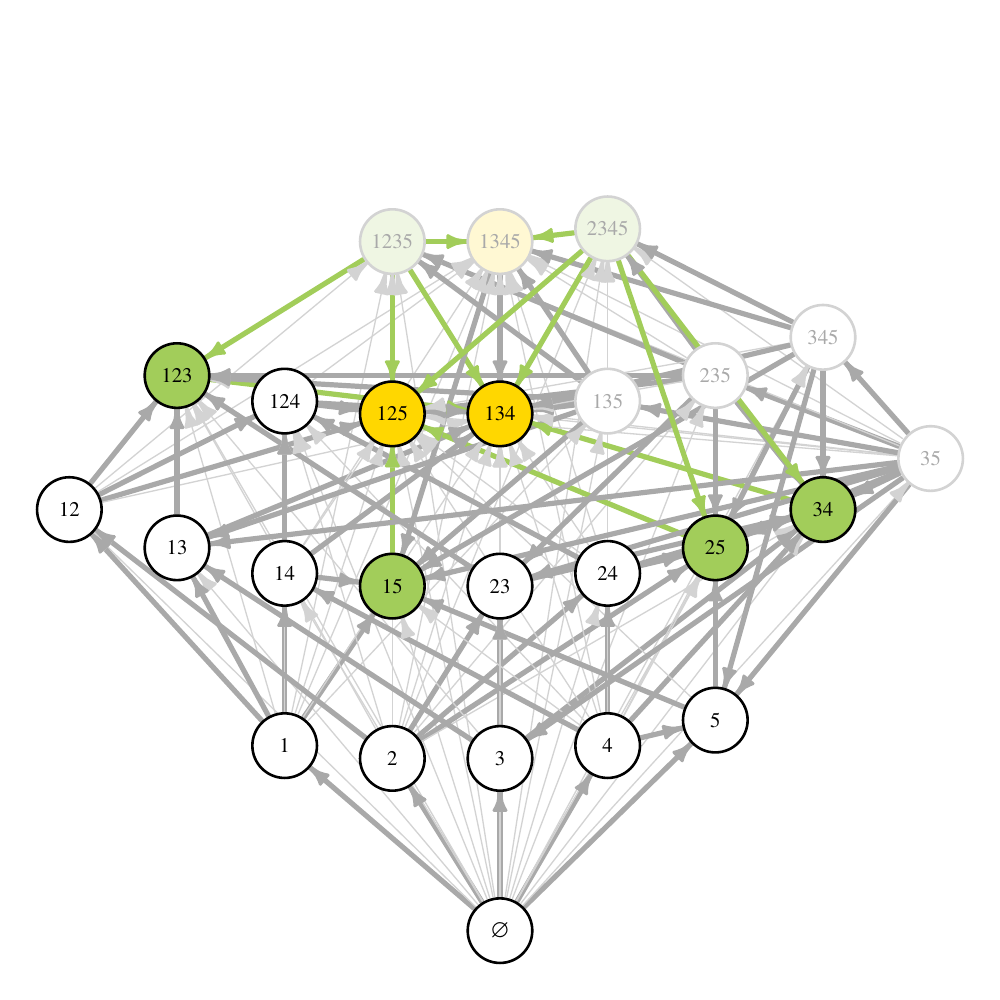}\includegraphics[width=0.4\textwidth]{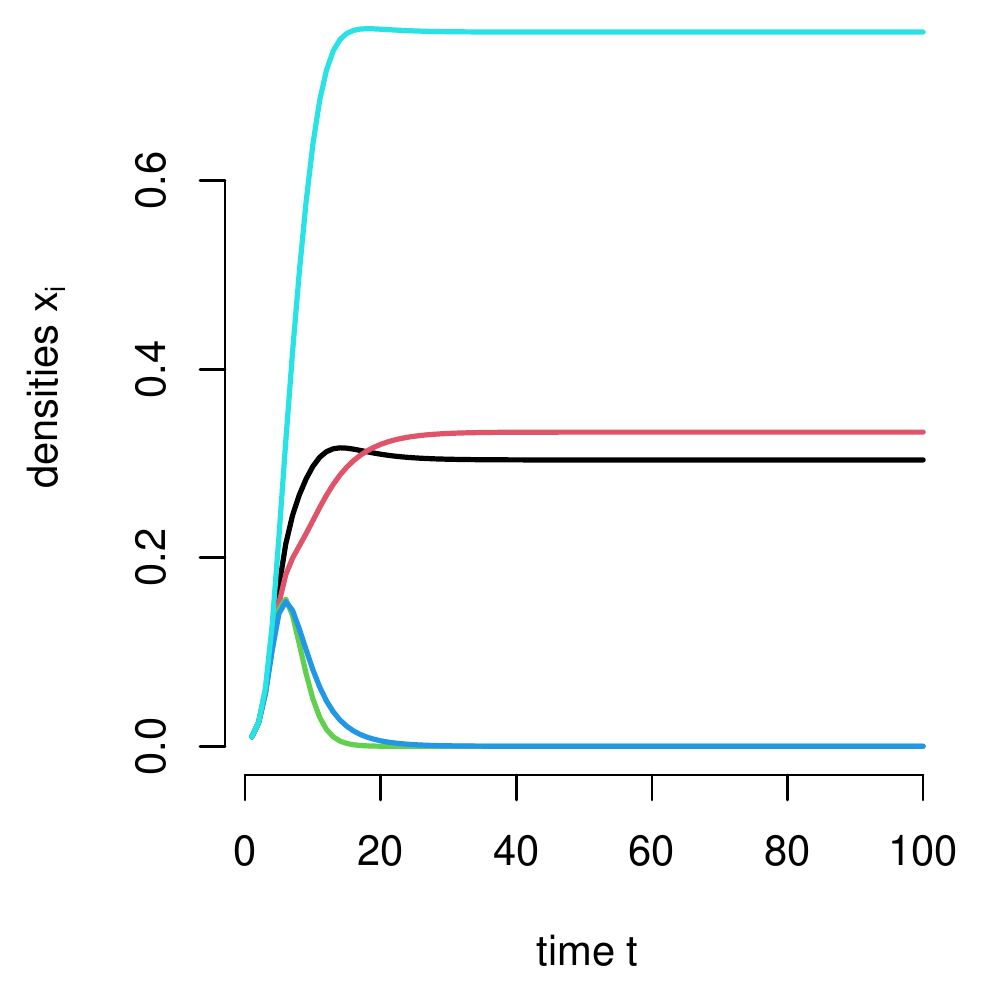}
\caption{Invasion graphs,  $-i$ communities, and sample simulations for two  $5$ species competitive Lotka-Volterra models.  Vertex labels correspond to the species in the community. $-i$ communities for which species $i$ has a positive invasion growth rate are colored olive green, and otherwise gold. \bb Lighter shaded vertices correspond to non-permanent communities, all others correspond to permanent communities. Thicker directed edges correspond to single species invasions. Green directed edges indicate transitions due to species $i$ invading a $-i$ community. \ee  Both models have acyclic invasion graphs, but only the model in the top panels allows for robust permanence. Sample simulations of the models are shown in the right hand panels. Parameter values in \ref{sec:parms}.}\label{fig:LV}
\end{figure}

\subsection{Lotka-Volterra systems\label{sec:LV} }

Consider the Lotka-Volterra equations where $x$ is the vector of species densities and there are no $y$ variables (see, however, below for several extensions involving auxiliary variables). Let $A$ be the $n\times n$ matrix corresponding to the species interaction coefficients and $b$ the $n\times 1$ vector of intrinsic rates of growth. Then $f(x)=Ax+b$. 

Assume $A$ and $b$ are such that the system is dissipative (i.e. \textbf{A2} holds). \citet[ch.~15.2]{hofbauer-sigmund-98} provide various algebraic conditions that ensure dissipativeness. Furthermore, assume that each face of the non-negative orthant has at most one internal equilibrium. Under these assumptions, the Lotka-Volterra system exhibits the time averaging property. Namely, if $z=x$ is an initial condition such that the $\omega$-limit set of $x.t$ is contained in $\F(S)$ for some $S\subset \{1,\dots,n\}$, then 
\[
\lim_{t\to \infty} \frac{1}{t}\int_0^t x.s\,ds=x^*
\]
where $x^*$ is the unique equilibrium in $\F(S)$.  The invasion growth rate of species $i$ along this trajectory equals
\[
\lim_{t\to \infty} \frac{1}{t}\int_0^t f_i(x.s)ds=(Ax^*)_i+b_i.
\]
Therefore,  $r_i(\mu)=\sum_j A_{ij}x_j^*+b_i$ for any ergodic measure $\mu$ supported by $\F(S)$. In particular,  assumption \textbf{A3b} is satisfied. 

\newcommand{\LVeq}{{\mathcal E}}
These observations imply that computing the invasion scheme and graph involves three steps. 
\begin{description}
\item[Step 1] Find the set $\LVeq$ of all feasible equilibria with at least one missing species: for each proper subset $S \subset [n]$, solve for $x \in \R^n$ such that $(Ax)_i=-b_i$ and $x_i > 0$ for $i \in S$, and  $x_j = 0$ for $j \notin S$. By assumption, $\LVeq$ is a finite set. The vertices $\S$ of the invasion graph are given by $S$ such that $\F(S)\cap \LVeq\neq \emptyset$.
\item[Step 2] Compute the invasion scheme $(r_i(S))_{S\in \S, i\in [n]}$ where $r_i(S)=\sgn ((Ax)_i +b_i)$ with $x=\LVeq \cap \F(S)$. 
\item [Step 3] Compute the invasion graph by checking the directed edge condition for each pair of subcommunities in $\S$, i.e., there is a directed edge from $S$ to $T$ iff $r_j(S)>0$ for all $j\in T\setminus S$ and $r_j(S)<0$ for all $j \in S\setminus T$. 
\end{description}

Two examples of using this algorithm for different $5$ species competitive communities are shown in Figure~\ref{fig:LV}. \bb In the case of community in the top panel, we also plot the vertex $[n]$ and transitions to this community. \ee  For both examples, the invasion graph is acyclic.  For the community in the top panel, there is a unique $-i$ community for each species and species $i$ has positive invasion growth rates at  this  community. Hence, Corollary~\ref{cor1} implies that this system is robustly permanent, see sample simulation in the upper right panel of Figure~\ref{fig:LV}. \bb Three of $-i$ communities ($i=2,3,4$) are co-dimension one and, consequently, species $i$ invading these communities (green directed edges) results in all species coexisting. The other two $-i$ communities ($i=1,5$) have more missing species. For example, the $-1$ community $\{2,3,4\}$ also misses species $5$.    When species $1$ invades this community,  species $3$ and $4$ are displaced leading to the $-5$ community $\{1,2\}$.  Successive single species invasions by species $5$, $3$ (or $4$), and then $4$ (or $3$) assemble the full community.  \ee For the community in the lower panel of Figure~\ref{fig:LV}, there are nine  $-i$ communities. For three of these $-i$ communities, species $i$ has negative invasion growth rates. Hence, the system isn't permanent. \bb Two of these $-i$ communities ($\{1,2,5\}$ and $\{1,3,4\}$) correspond to permanent subsystems where all the missing species have negative invasion growth rates. Hence, these $-i$ communities correspond to attractors for the full model dynamics. Moreover, each is a $-i$ community for each of the missing species e.g. $\{1,3,4\}$ is a $-2$ and $-5$ community. The third of these uninvadable $-i$ communities ($\{1,3,4,5\}$) is a non-permanent system due to the attractor on the boundary for the $\{1,3,4\}$ community. This explains the directed edge from $\{1,3,4,5\}$ to $\{1,3,4\}$. \ee The lower, right hand panel of Figure~\ref{fig:LV} demonstrates that the dynamics approach a three species attractor corresponding to one of the $-i$ communities.

Certain modifications of the classical Lotka-Volterra equations also satisfy the time-averaging property. Hence, for these modifications, computing the invasion scheme and the invasion graph also reduces to solving systems of linear equations. For example, \citet{jmb-18} showed that if the intrinsic rates of growth are driven by a uniquely ergodic process (e.g. periodic, quasi-periodic), then this reduction is possible. In this case, one uses auxiliary variables $\frac{dy}{dt}=f(y)$ that are uniquely ergodic and replace $b$ with vector valued functions $b(y)$. 

\subsection{Competitors sharing a switching predator}\label{sec:switch-pred}

Theoretical and empirical studies have shown that predators can mediate coexistence between competing prey species~\citep{paine1966,hutson_vickers1983,kirlinger1986,schreiber1997}. For example, \citet{hutson_vickers1983,schreiber1997} showed that a  generalist predator with a type I or II functional response can mediate coexistence when one prey excludes the other, but can not mediate coexistence when the prey are bistable, i.e., the single prey equilibria are stable in the absence of the predator. Here, we re-examine these conclusions by considering a modified Lotka-Volterra model accounting for predator switching~\citep{kondoh2003}. 

Let $x_1,x_2$ be the densities of two prey competitors. Let $x_3$ be the density of a predator whose prey preference is determined by the relative densities of the two prey species. Specifically, if  $y$ is the fraction of predators actively searching for prey $1$ and $1-y$ is the fraction actively searching for prey $2$, then we assume predator's switch between prey at a rate proportional to the prey densities, i.e., $\frac{dy}{dt}=x_1(1-y)-x_2y$. For simplicity, we assume the two prey species have a common intrinsic rate of growth $r$, normalized intraspecific competition coefficients and a common interspecific competition coefficient $\alpha$. Under these assumptions, the predator-prey dynamics are 
\begin{equation}\label{eq:apparent}
\begin{aligned}
\frac{dx_1}{dt}&=rx_1(1-x_1-\alpha x_2)-ax_1 y x_3 \\
\frac{dx_2}{dt}&=rx_2(1-x_2-\alpha x_1)-ax_2 (1-y)x_3 \\
\frac{dx_3}{dt}&=ax_1yx_3+ax_2 (1-y)x_3 -d x_3\\
\frac{dy}{dt}&=x_1(1-y)-x_2y
\end{aligned}
\end{equation}
where $a$ is the attack rate of the predator and $d$ is the per-capita death rate of the predator. The state space is $\K=[0,\infty)^3\times [0,1].$

As we are interested in predator mediated coexistence, we assume that $a>d$ to ensure the predator always persists. Under this assumption, the single prey-predator subsystem $x_i-x_3-y$ with $i=1,2$ has a unique globally stable equilibrium given by $x_i=d/a$, $x_3=r(1-d/a)/a$ and $y=1,0$ for $i=1,2$, respectively. If $\alpha\in[0,1)$, then the $x_1-x_2-y$ prey subsystem has a globally stable equilibrium $x_1=x_2=1/(1+\alpha)$ and $y=1/2$.  Alternatively, if $\alpha >1$, then the $x_1-x_2-y$ prey subsystem is bistable with a saddle at $x_1=x_2=1/(1+\alpha)$ and $y=1/2$. As the invasion graphs for $\alpha\neq 0$ are acyclic, Theorem~\ref{thm:main} implies that robust permanence occurs if and only if the three equilibria associated with the  $-i$ communities $\{1,2\},\{1,3\},\{2,3\}$ are invadable. Invasibility of $\{1,2\}$ and $\{1,3\}$ requires that $r(1-\alpha d/a)>0$. This occurs whenever $a/d>\alpha$, i.e., predation is sufficiently strong relative to interspecific competition. Invasibility of $\{2,3\}$  requires $a/d>1+\alpha$. In particular, unlike the case of non-switching predators~\citep{hutson_vickers1983}, predator-mediated coexistence is possible in the case of bistable prey. It is worth noting that in this case, the invasion scheme is not sequentially permanent.

\begin{figure}
\includegraphics[width=0.45\textwidth]{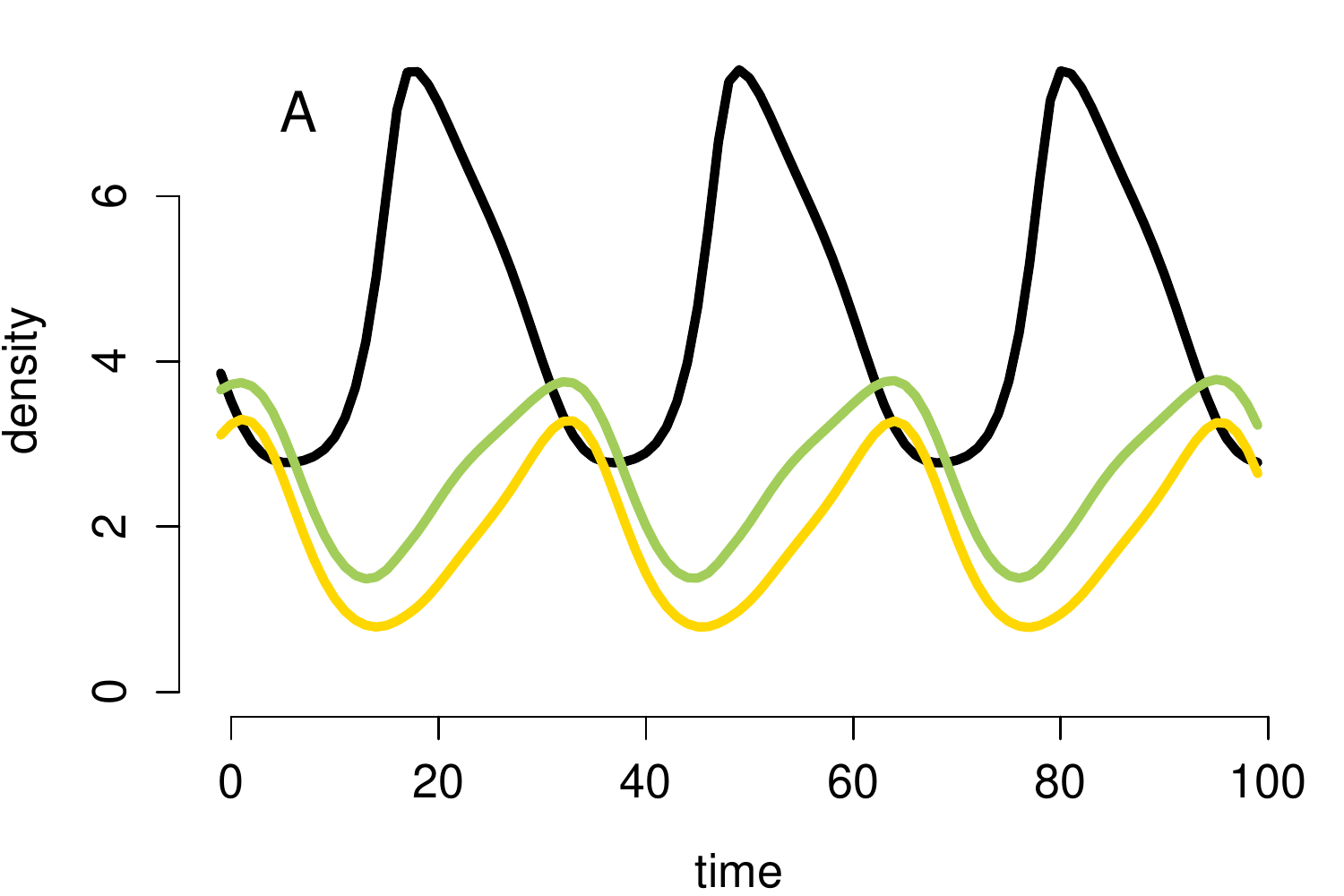} 
\includegraphics[width=0.45\textwidth]{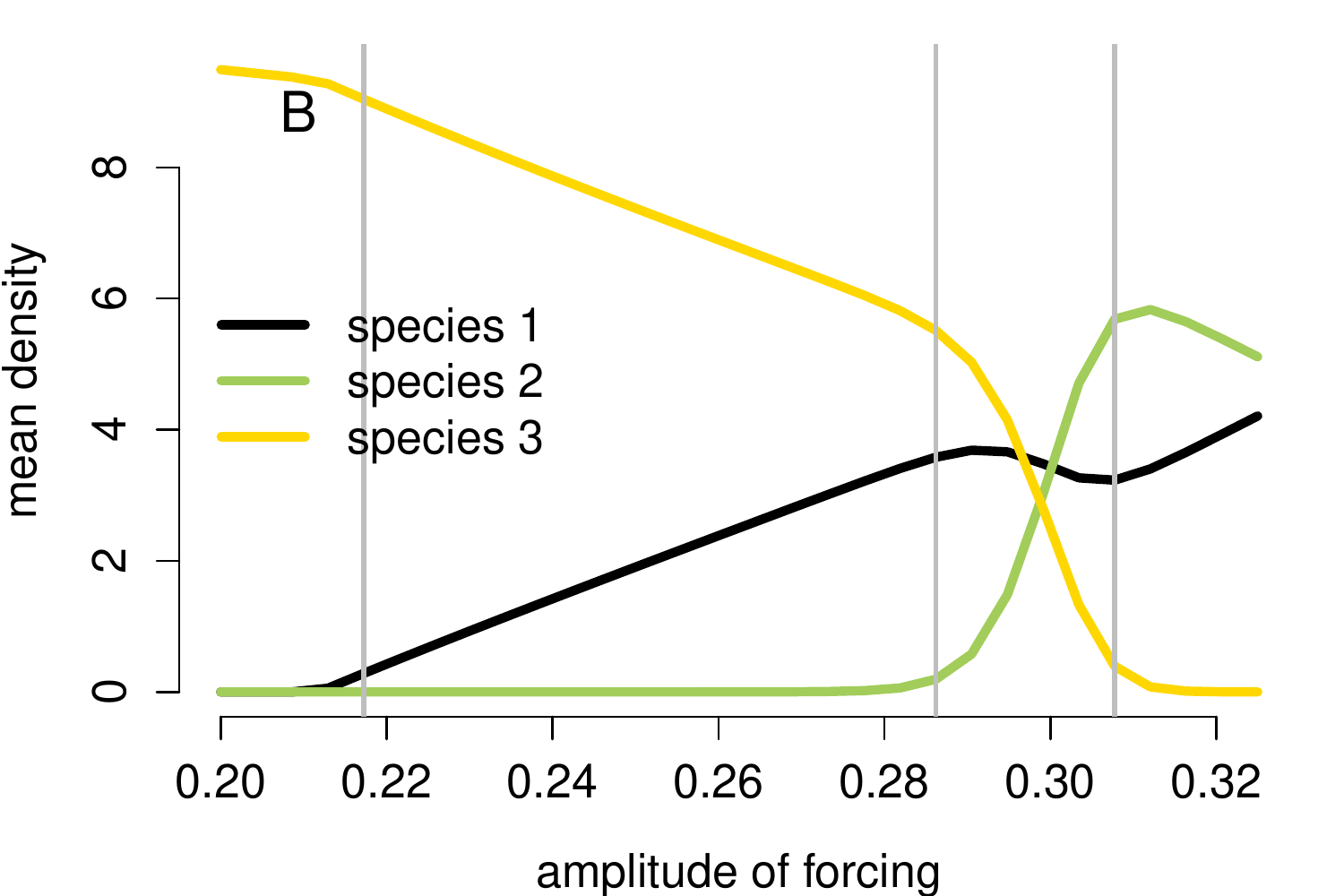} 
\caption{Periodic attractors for three competing species in a periodically-forced chemostat. In A, a periodic attractor at which all three species coexist (with $a=0.3$). In B, the mean densities of all three species along periodic attractors for increasing amplitude of the periodically-forced dilution rate.  Parameter values: $D_0=0.4675$, $\omega=0.2$, $\alpha_1=1,\alpha_2=0.7,\alpha_3=0.64$, $\beta_1=1,\beta_2=0.3,\beta_3=0.2$, $\bb R \ee_0=11$, and $a=0.3$ in A and as shown in B.}\label{fig:chemostat1}
\end{figure}

\begin{figure}
\includegraphics[width=0.22\textwidth]{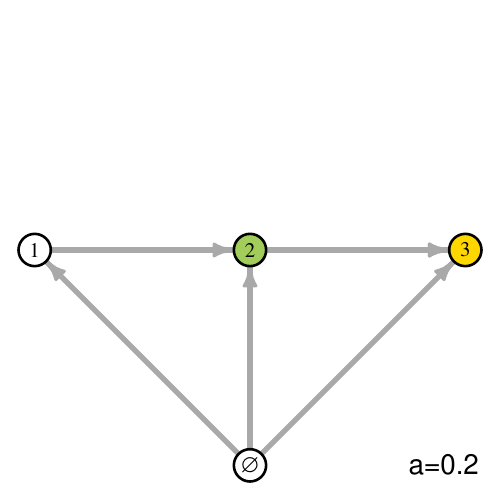}
\includegraphics[width=0.22\textwidth]{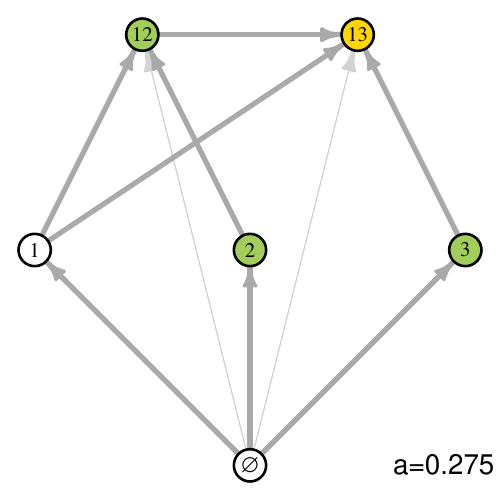}
\includegraphics[width=0.22\textwidth]{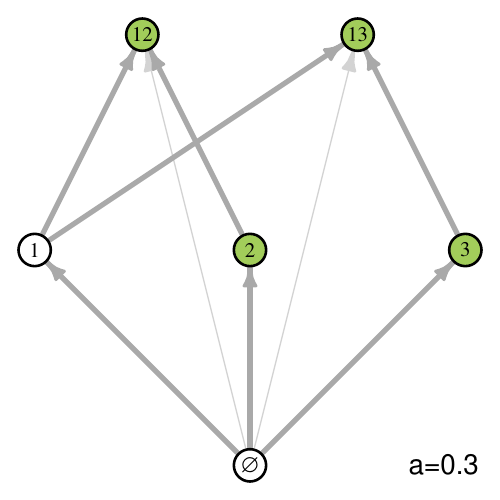}
\includegraphics[width=0.22 \textwidth]{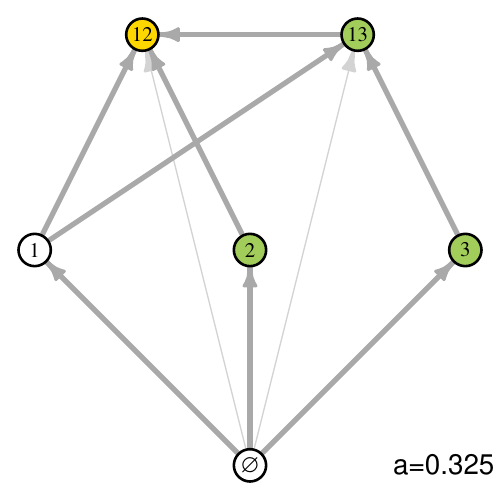}
\caption{Invasion graphs for three competing species in a periodically-forced chemostat for increasing values of the amplitude $a$ of the dilution rate. Shaded nodes correspond to  $-i$ subcommunities; olive green shading corresponds to a positive invasion growth rate of species $i$ and yellow shading a negative invasion growth rate. Parameters as in Figure~\ref{fig:chemostat1}}\label{fig:chemostat2}
\end{figure}

\subsection{Three competing species in a periodically forced chemostat} 

Chemostat are used in laboratories to study the dynamics of interacting microbial populations. As they provide a highly controlled environment, they are the basis of many mathematical modeling studies~\citep{smith_waltman1995}. For example, when species compete for  a single limiting resource with a constant inflow, \citet{hsu_hubbell1977} proved that (generically) the species with the lowest break-even point excludes all others. When the resource inflow, however, fluctuates, coexistence is possible. For example, using a mixture of numerics and analysis, \citet{lenas_pavlou1995} and \citet{wolkowicz1998} showed that three competing species can coexist when the inflow rate varies periodically. Specifically, if $\bb R \ee$ denotes the density of the resource in the chemostat and $x_i$ the density of competitor $i$, then they considered a model of the form:
\begin{equation}\label{chemostat}
\begin{aligned}
\frac{d\bb R \ee}{dt}=&(\bb R \ee_0-\bb R \ee)D(t)-\sum_{i=1}^3f_i(\bb R \ee)x_i\\
\frac{dx_i}{dt}=&x_i(t)(f_i(\bb R \ee)-D(t)) \quad i=1,2,3
\end{aligned}
\end{equation}
where \bb $R_0$ is the incoming resource concentration, \ee $D(t)=D_0+a\cos(\omega t)$ is a periodically fluctuating dilution rate, and $f_i(\bb R \ee)=\frac{\alpha_i \bb R \ee}{\beta_i+\bb R \ee}$ corresponds to a type II functional response. We  can put this model into our coordinate system by defining $y_1=\bb R \ee$ and $(y_2,y_3)$ to be points on the unit circle: 
\begin{equation}\label{chemostat2}
\begin{aligned}
\frac{dy_1}{dt}&=(\bb R \ee_0-y_1)D(y_2)-\sum_{i=1}^3f_i(y_1)x_i\\
\frac{dx_i}{dt}&=x_i(t)(f_i(y_1)-D(y_2)) \qquad i=1,2,3\\
\frac{dy_2}{dt}&=-\omega y_3\quad
\frac{dy_3}{dt}=\omega y_2 \mbox{ with }y_2^2+y_3^2=1
\end{aligned}
\end{equation}
where $D(y_2)=D_0+a\,y_2.$ The state space for \eqref{chemostat2} is $\C=\R_+^4\times S^1$ where $S^1\subset \R^2$ denotes the unit circle.

Using a numerically based invasion analysis, \citet{wolkowicz1998} showed that  \eqref{chemostat2} is permanent for the parameter values  $\alpha_1=1,\alpha_2=0.7,\alpha_3=0.64$, $\beta_1=1,\beta_2=-.3,\beta_3=0.2$,$D_0=0.4675$, $\omega=0.2$, $a=0.3$, and $\bb R \ee_0=11.$ Figure~\ref{fig:chemostat1}A plots the time series for what appears to be a global periodic attractor at which the three species coexist. Varying the amplitude of the dilution rate, however, can lead to the loss of one or two species (Figure~\ref{fig:chemostat1}B): at amplitudes higher than $0.3$, species $3$ is lost; at slightly lower amplitudes than $0.3$, species $2$ is lost; at much lower amplitudes, both species $1$ and $2$ are lost. 

To better understand these effects of the amplitude of fluctuations on species coexistence, we numerically calculated the Lyapunov exponents for all subsystems and created the invasion graphs for different amplitudes of the dilution rate (Figure~\ref{fig:chemostat2}). At the amplitude value $a=0.3$ used by \citet{wolkowicz1998}, we recover the invasion graph suggested by their analysis. Specifically there are only two -$i$ communities, $\{1,2\}$, $\{1,3\}$, and both of these communities can be invaded by the missing species. Hence, Theorem~\ref{thm:main} implies robust permanence. At a higher amplitude of $a=0.325$, the community $\{1,2\}$ can no longer be invaded by species $3$ and permanence no longer occurs, consistent with the loss of species $3$ in Figure~\ref{fig:chemostat1}B at $a=0.325$. At a lower value of the amplitude, $a=0.275$, the $-2$ community is not invadable, a prediction consistent with species $2$ being excluded in Figure~\ref{fig:chemostat1}B at $a=0.275$. At an even lower value of the amplitude, $a=0.2$, the invasion graph in Figure~\ref{fig:chemostat2} dramatically changes with the community determined by species $3$ resisting invasion from the other two species, consistent with only species $3$ persisting in Figure~\ref{fig:chemostat1} at $a=0.2$. 

\section{Discussion~\label{sec5}}

Modern coexistence theory (MCT) \bb decomposes and compares  invasion growth rates to identify mechanisms of coexistence~\citep{chesson-94,chesson-00b,letten2017,Chesson_2018,barabas2018,ellner_snyder2018,grainger_levine2019,grainger2019,godwin2020,chesson2020}.  Our work addresses two key question for this theory: When are signs of invasion growth rates sufficient to determine coexistence? When signs are sufficient, which positive invasion growth rates are critical for coexistence? \ee

To answer these questions, we introduced invasion schemes and graphs. The invasion scheme catalogs all invasion growth rates associated with every community missing at least one species. The invasion graph describes potential transitions between communities using the invasion growth rates. Potential transition from a community $S$ to a community $T$ occurs if (i) all the species in $T$ but not in $S$ have positive invasion growth rates when $S$ is the resident community and (ii) all the species in $S$ but not in $T$ have negative invasion growth rates when $T$ is the resident community.  Our definition of invasion graphs is related to what is often called an assembly graph in the community assembly literature~\citep{post-pimm-83,law-morton-96,morton1996,servan2021}. For example, \citet[page 1030]{servan2021} define  assembly graphs for Lotka-Volterra systems. Like our definition applied to Lotka-Volterra systems (see section \ref{sec:LV}), vertices correspond to feasible equilibria of the model. Unlike our definition, \citet{servan2021}  only consider transitions between communities due to single species invasions.  This more restrictive definition, however, may not exclude heteroclinic cycles between equilibria due to multiple species invasion attempts, i.e., the ``1066 effect'' of  \citet{lockwood1997}. These heteroclinic cycles may exclude the possibility of determining permanence only based on the signs of the invasion growth rates~\citep{hofbauer-94}. 

We show that the signs of the invasion growth rates determine coexistence whenever the invasion graph is acyclic, i.e., there is no sequence of invasions starting and ending at the same community. For acyclic graphs, we identify a precise notion of what \citet{chesson-94} has called ``$-i$ communities'', i.e., the communities determined in the absence of species $i$. Specifically, these are communities where (i) species $i$ is missing, and (ii) all other missing species have a negative invasion growth rate. $-i$ communities can be found, approximately (see Remark~\ref{remark2}), by simulating initial conditions supporting all species but species $i$ for a sufficiently long time, removing ``atto-foxes''~\citep{sari2015migrations,fowler2021atto}, and seeing what species are left. This characterization ensures that each species $i$ has at least one $-i$ community associated with it.  

When the invasion graph is acyclic, we show that robust permanence occurs if, and only if, at each  $-i$ community, species $i$ has a positive invasion growth rate. Thus,  this result helps define  the domain of modern coexistence theory which \bb relies \ee on the signs of invasion growth rates \bb determining \ee coexistence~\citep{macarthur_levins1967, chesson-94,chesson-00b,letten2017,barabas2018,ellner_snyder2018,grainger_levine2019,grainger2019,godwin2020,chesson2020}. 

\bb Our work also highlights the importance of going beyond  average Lyapunov functions when only using qualitative information about invasion growth rates. The average Lyapunov function condition for permanence  requires the existence of positive weights $p_i$ such that  $\sum_i p_i r_i(\mu)>0$  all ergodic measures $\mu$ supporting a strict subset of species \citep{hofbauer-81}. This sufficient condition for permanence has received more attention in the theoretical ecology literature~\citep{law-blackford-92,law-morton-93,law-morton-96,morton-law-pimm-drake-96,Chesson_2018,chesson2020} than  sufficient topological conditions using Morse decompositions~\citep{garay-89,hofbauer-so-89}, or conditions using invasion growth rates with Morse decompositions~\citep{jde-00,garay-hofbauer-03,nonlinearity-04,jde-10,roth-etal-17,jmb-18}. This is likely due to the more technical nature of these latter papers. However, when one only knows the signs of the invasion growth rates, the average Lyapunov condition only works for specific types of acyclic graphs. Specifically, those graphs corresponding to a nested sequence of permanent communities, $\{1\}, \{1,2\},\{1,2,3\},\dots, \{1,2,3,\dots,n\}$, where species $i+1$ has non-negative invasion growth rates for all communities including species $1,2,\dots,i$. While these special graphs arise in some situations (e.g.  diffusive competition~\citep{Chesson_2018,jmaa-02} or certain generalizations of mutual invasibility~\citep{chesson-kuang-08}), many communities do not exhibit this special structure.  \ee

There remain many mathematical challenges for an invasion-based approach to coexistence. First, while our main assumption \textbf{A3b} naturally holds for Lotka-Volterra and replicator systems, they are too strong for many other systems. Notably, our assumptions do not allow for the invariant sets supporting multiple ergodic measure at which the invasion growth rates for a species have opposite sign. What can be done in these cases is not clear as they can cause complex dynamical phenomena such as riddled basins of attraction~\citep{alexander-kan-yorke-you-92,hhrs-04} and open sets of models where permanence and attractors of extinction are intricately intermingled~\citep{nonlinearity-04}. More optimistically, for stochastic models accounting for environmental stochasticity, the story may be simpler. For these models, permanence corresponds to stochastic persistence -- a statistical tendency of all species staying away from low densities~\citep{chesson-82,benaim-etal-08,jmb-11,benaim2018stochastic,hening-nguyen-18,benaim_schreiber2019}. Under certain natural irreducibility assumptions~\citep{jmb-11,hening-nguyen-18,hening2020classification}, each face $\F(S)$ supports at most one ergodic measure; \textbf{A3b} naturally holds for these models. Using the stochastic analog of invasion growth rates, one can define invasion schemes and invasion graphs as we do here. For these models, it is natural to conjecture: if the invasion graph is acyclic and all $-i$ communities are invadable, then the model is stochastically persistent. 

Dealing with cyclic invasion graphs is another major mathematical challenge. When these cycles are sufficiently simple, their stability properties can be understood using either average Lyapunov functions or 
Poincar\'e return maps~\citep{hofbauer-94,krupa-melbourne-95,krupa-97}. For more complex heteroclinic cycles (even between equilibria), the path forward for characterizing coexistence via invasion growth rates is less clear~\citep{hofbauer-94,brannath-94}. Even for cyclic graphs where invasion growth rates characterize coexistence, it remains unclear how to carry out the second step of  modern coexistence theory, i.e.,  how best to decompose and compare invasion growth rates to identify the relative contributions of different coexistence mechanisms.  We hope that future mathematical advances on these issues will be incorporated into a next version of the modern coexistence theory (MCT v2.1).

\textbf{Acknowledgements}: We thank Adam Clark and J\"{u}rg Spaak for providing useful feedback on an early draft of the manuscript, and Peter Chesson and two anonymous reviewers for  their  thoughtful reviews that further improved this work.  SJS was supported in part by U.S. National Science Foundation grants DMS-1716803, DMS-1313418, and a Simons Visiting Professorship to SJS sponsored by 
Reinhard B\"{u}rger. 

\bibliography{IG}

\begin{thebibliography}{76}
\providecommand{\natexlab}[1]{#1}
\providecommand{\url}[1]{\texttt{#1}}
\expandafter\ifx\csname urlstyle\endcsname\relax
  \providecommand{\doi}[1]{doi: #1}\else
  \providecommand{\doi}{doi: \begingroup \urlstyle{rm}\Url}\fi

\bibitem[Adler et~al.(2010)Adler, Ellner, and Levine]{adler-etal-10}
P.B. Adler, S.P. Ellner, and J.M. Levine.
\newblock Coexistence of perennial plants: an embarrassment of niches.
\newblock \emph{Ecology letters}, 13:\penalty0 1019--1029, 2010.

\bibitem[Alexander et~al.(1992)Alexander, Kan, Yorke, and
  You]{alexander-kan-yorke-you-92}
J.~C. Alexander, I.~Kan, J.~A. Yorke, and Z.~You.
\newblock Riddled basins.
\newblock \emph{Int. J. Bif. Chaos}, 2:\penalty0 795--813, 1992.

\bibitem[Barab{\'a}s et~al.(2018)Barab{\'a}s, D'Andrea, and Stump]{barabas2018}
G.~Barab{\'a}s, R.~D'Andrea, and S.M. Stump.
\newblock Chesson's coexistence theory.
\newblock \emph{Ecological Monographs}, 88:\penalty0 277--303, 2018.

\bibitem[Bena\"{\i}m(2018)]{benaim2018stochastic}
M.~Bena\"{\i}m.
\newblock Stochastic persistence.
\newblock \emph{arXiv preprint arXiv:1806.08450}, 2018.

\bibitem[Bena\"{\i}m and Schreiber(2019)]{benaim_schreiber2019}
M.~Bena\"{\i}m and S.J. Schreiber.
\newblock Persistence and extinction for stochastic ecological models with
  internal and external variables.
\newblock \emph{Journal of Mathematical Biology}, 79:\penalty0 393--431, 2019.

\bibitem[Bena\"{\i}m et~al.(2008)Bena\"{\i}m, Hofbauer, and
  Sandholm]{benaim-etal-08}
M.~Bena\"{\i}m, J.~Hofbauer, and W.~Sandholm.
\newblock Robust permanence and impermanence for the stochastic replicator
  dynamics.
\newblock \emph{Journal of Biological Dynamics}, 2:\penalty0 180--195, 2008.

\bibitem[Brannath(1994)]{brannath-94}
W.~Brannath.
\newblock Heteroclinic networks on the tetrahedron.
\newblock \emph{Nonlinearity}, 7:\penalty0 1367--1384, 1994.

\bibitem[Butler et~al.(1986)Butler, Freedman, and
  Waltman]{butler-freedman-waltman-86}
G.~J. Butler, H.~I. Freedman, and P.~Waltman.
\newblock Uniformly persistent systems.
\newblock \emph{Proceedings of the American Mathematical Society}, 96:\penalty0
  425--430, 1986.

\bibitem[Chesson(1994)]{chesson-94}
P.~Chesson.
\newblock Multispecies competition in variable environments.
\newblock \emph{Theoretical Population Biology}, 45:\penalty0 227--276, 1994.

\bibitem[Chesson(2018)]{Chesson_2018}
P.~Chesson.
\newblock Updates on mechanisms of maintenance of species diversity.
\newblock \emph{Journal of Ecology}, 106:\penalty0 1773--1794, 2018.

\bibitem[Chesson(2020)]{chesson2020}
P.~Chesson.
\newblock Chesson's coexistence theory: Comment.
\newblock \emph{Ecology}, 101:\penalty0 e02851, 2020.

\bibitem[Chesson and Kuang(2008)]{chesson-kuang-08}
P.~Chesson and J.J. Kuang.
\newblock The interaction between predation and competition.
\newblock \emph{Nature}, 456\penalty0 (7219):\penalty0 235--238, 2008.

\bibitem[Chesson(1978)]{chesson-78}
P.~L. Chesson.
\newblock Predator-prey theory and variability.
\newblock \emph{Annual Review of Ecology and Systematics}, 9:\penalty0
  323--347, 1978.

\bibitem[Chesson(1982)]{chesson-82}
P.~L. Chesson.
\newblock The stabilizing effect of a random environment.
\newblock \emph{Journal of Mathematical Biology}, 15:\penalty0 1--36, 1982.

\bibitem[Chesson and Ellner(1989)]{chesson-ellner-89}
P.~L. Chesson and S.~Ellner.
\newblock Invasibility and stochastic boundedness in monotonic competition
  models.
\newblock \emph{Journal of Mathematical Biology}, 27:\penalty0 117--138, 1989.

\bibitem[Chesson(2000)]{chesson-00b}
P.L. Chesson.
\newblock Mechanisms of maintenance of species diversity.
\newblock \emph{Annual Review of Ecology and Systematics}, 31:\penalty0
  343--366, 2000.

\bibitem[Conley(1978)]{conley-78}
C.~Conley.
\newblock Isolated {I}nvariant {S}ets and {M}orse {I}ndex.
\newblock \emph{American Mathematical Society, CBMS}, 38, 1978.

\bibitem[Ellner et~al.(2018)Ellner, Snyder, Adler, and
  Hooker]{ellner_snyder2018}
S.~P. Ellner, R.~E. Snyder, P.~B. Adler, and G.~Hooker.
\newblock An expanded modern coexistence theory for empirical applications.
\newblock \emph{Ecology Letters}, 22:\penalty0 3--18, 2018.

\bibitem[Ellner et~al.(2016)Ellner, Snyder, and Adler]{ellner-etal-16}
S.P. Ellner, R.E. Snyder, and P.B. Adler.
\newblock How to quantify the temporal storage effect using simulations instead
  of math.
\newblock \emph{Ecology Letters}, 19:\penalty0 1333--1342, 2016.

\bibitem[Fowler(2021)]{fowler2021atto}
A.C. Fowler.
\newblock Atto-foxes and other minutiae.
\newblock \emph{Bulletin of Mathematical Biology}, 83:\penalty0 1--25, 2021.

\bibitem[Garay(1989)]{garay-89}
B.~M. Garay.
\newblock Uniform persistence and chain recurrence.
\newblock \emph{Journal of Mathematical Analysis and Applications},
  139:\penalty0 372--382, 1989.

\bibitem[Garay and Hofbauer(2003)]{garay-hofbauer-03}
B.~M. Garay and J.~Hofbauer.
\newblock Robust permanence for ecological differential equations, minimax, and
  discretizations.
\newblock \emph{SIAM Journal of Mathematical Analysis}, 34:\penalty0
  1007--1039, 2003.

\bibitem[Godwin et~al.(2020)Godwin, Chang, and Cardinale]{godwin2020}
C.M. Godwin, F.H. Chang, and B.J. Cardinale.
\newblock An empiricist's guide to modern coexistence theory for competitive
  communities.
\newblock \emph{Oikos}, 129:\penalty0 1109--1127, 2020.

\bibitem[Grainger et~al.(2019{\natexlab{a}})Grainger, Letten, Gilbert, and
  Fukami]{grainger2019}
T.N. Grainger, A.D. Letten, B.~Gilbert, and T.~Fukami.
\newblock Applying modern coexistence theory to priority effects.
\newblock \emph{Proceedings of the National Academy of Sciences}, pages
  6205--6210, 2019{\natexlab{a}}.

\bibitem[Grainger et~al.(2019{\natexlab{b}})Grainger, Levine, and
  Gilbert]{grainger_levine2019}
T.N. Grainger, J.M. Levine, and B.~Gilbert.
\newblock The invasion criterion: A common currency for ecological research.
\newblock \emph{Trends in Ecology \& Evolution}, 34:\penalty0 925--935,
  2019{\natexlab{b}}.

\bibitem[Hening and Nguyen(2018)]{hening-nguyen-18}
A.~Hening and D.H. Nguyen.
\newblock Coexistence and extinction for stochastic {K}olmogorov systems.
\newblock \emph{The Annals of Applied Probability}, 28:\penalty0 1893--1942,
  2018.

\bibitem[Hening et~al.(2020)Hening, Nguyen, and
  Schreiber]{hening2020classification}
A.~Hening, D.H. Nguyen, and S.J. Schreiber.
\newblock A classification of the dynamics of three-dimensional stochastic
  ecological systems.
\newblock \emph{arXiv preprint arXiv:2004.00535}, 2020.

\bibitem[Hofbauer et~al.(2004)Hofbauer, Hofbauer, Raith, and
  Steinberger]{hhrs-04}
F.~Hofbauer, J.~Hofbauer, P.~Raith, and T.~Steinberger.
\newblock Intermingled basins in a two species system.
\newblock \emph{J.~Math.~Biology}, 49:\penalty0 293--309, 2004.

\bibitem[Hofbauer(1981)]{hofbauer-81}
J.~Hofbauer.
\newblock A general cooperation theorem for hypercycles.
\newblock \emph{Monatshefte f\"ur Mathematik}, 91:\penalty0 233--240, 1981.

\bibitem[Hofbauer(1994)]{hofbauer-94}
J.~Hofbauer.
\newblock Heteroclinic cycles in ecological differential equations.
\newblock \emph{Tatra Mt. Math. Publ}, 4:\penalty0 105--116, 1994.

\bibitem[Hofbauer and Schreiber(2004)]{nonlinearity-04}
J.~Hofbauer and S.~J. Schreiber.
\newblock To persist or not to persist?
\newblock \emph{Nonlinearity}, 17:\penalty0 1393--1406, 2004.

\bibitem[Hofbauer and Schreiber(2010)]{jde-10}
J.~Hofbauer and S.~J. Schreiber.
\newblock Robust permanence for interacting structured populations.
\newblock \emph{Journal of Differential Equations}, 248:\penalty0 1955--1971,
  2010.

\bibitem[Hofbauer and Sigmund(1998)]{hofbauer-sigmund-98}
J.~Hofbauer and K.~Sigmund.
\newblock \emph{Evolutionary {G}ames and {P}opulation {D}ynamics}.
\newblock Cambridge University Press, 1998.

\bibitem[Hofbauer and So(1989)]{hofbauer-so-89}
J.~Hofbauer and J.~W.~H. So.
\newblock Uniform persistence and repellors for maps.
\newblock \emph{Proceedings of the American Mathematical Soceity},
  107:\penalty0 1137--1142, 1989.

\bibitem[Hofbauer et~al.(2008)Hofbauer, Kon, and Saito]{hofbauer-kon-saito-08}
J~Hofbauer, R.~Kon, and Y.~Saito.
\newblock Qualitative permanence of {L}otka--{V}olterra equations.
\newblock \emph{Journal of Mathematical Biology}, 57:\penalty0 863--881, 2008.

\bibitem[Hsu et~al.(1977)Hsu, Hubbell, and Waltman]{hsu_hubbell1977}
S.~B. Hsu, S.~Hubbell, and P.~Waltman.
\newblock A mathematical theory for single-nutrient competition in continuous
  cultures of micro-organisms.
\newblock \emph{SIAM Journal on Applied Mathematics}, 32:\penalty0 366--383,
  1977.

\bibitem[Hutson and Schmitt(1992)]{hutson-schmitt-92}
V.~Hutson and K.~Schmitt.
\newblock Permanence and the dynamics of biological systems.
\newblock \emph{Mathematical Biosciences}, 111:\penalty0 1--71, 1992.

\bibitem[Hutson and Vickers(1983)]{hutson_vickers1983}
V.~Hutson and G.T. Vickers.
\newblock A criterion for permanent coexistence of species, with an application
  to a two-prey one-predator system.
\newblock \emph{Mathematical Biosciences}, 63:\penalty0 253--269, 1983.

\bibitem[Kirlinger(1986)]{kirlinger1986}
G.~Kirlinger.
\newblock Permanence in {L}otka-{V}olterra equations: linked prey-predator
  systems.
\newblock \emph{Mathematical Biosciences}, 82:\penalty0 165--191, 12 1986.

\bibitem[Kon(2004)]{kon-04}
R.~Kon.
\newblock Permanence of discrete-time {K}olmogorov systems for two species and
  saturated fixed points.
\newblock \emph{Journal of Mathematical Biology}, 48:\penalty0 57--81, 2004.

\bibitem[Kondoh(2003)]{kondoh2003}
M.~Kondoh.
\newblock Foraging adaptation and the relationship between food-web complexity
  and stability.
\newblock \emph{Science}, 299:\penalty0 1388--1391, 2003.

\bibitem[Krupa(1997)]{krupa-97}
M.~Krupa.
\newblock Robust heteroclinic cycles.
\newblock \emph{Journal of Nonlinear Science}, 7:\penalty0 129--176, 1997.

\bibitem[Krupa and Melbourne(1995)]{krupa-melbourne-95}
M.~Krupa and I.~Melbourne.
\newblock Asymptotic stability of heteroclinic cycles in systems with symmetry.
\newblock \emph{Ergodic Theory and Dynamical Systems}, 15:\penalty0 121--147,
  1995.

\bibitem[Kryloff and Bogoliouboff(1937)]{kryloff1937theorie}
N.~Kryloff and N.~Bogoliouboff.
\newblock La th\'{e}orie g\'{e}n\'{e}rale de la mesure dans son application
  \`{a} l'\'{e}tude des syst\`{e}mes dynamiques de la m\'{e}canique non
  lin\'{e}aire.
\newblock \emph{Annals of Mathematics}, 38:\penalty0 65--113, 1937.

\bibitem[Law and Blackford(1992)]{law-blackford-92}
R.~Law and J.~C. Blackford.
\newblock Self-assembling food webs: a global viewpoint of coexistence of
  species in {L}otka-{V}olterra communities.
\newblock \emph{Ecology}, 73:\penalty0 567--578, 1992.

\bibitem[Law and Morton(1993)]{law-morton-93}
R.~Law and R.~D. Morton.
\newblock Alternative permanent states of ecological communities.
\newblock \emph{Ecology}, 74:\penalty0 1347--1361, 1993.

\bibitem[Law and Morton(1996)]{law-morton-96}
R.~Law and R.~D. Morton.
\newblock Permanence and the assembly of ecological communities.
\newblock \emph{Ecology}, 77:\penalty0 762--775, 1996.

\bibitem[Lenas and Pavlou(1995)]{lenas_pavlou1995}
P.~Lenas and S.~Pavlou.
\newblock Coexistence of three competing microbial populations in a chemostat
  with periodically varying dilution rate.
\newblock \emph{Mathematical Biosciences}, 129:\penalty0 111--142, 1995.

\bibitem[Letten et~al.(2017)Letten, Ke, and Fukami]{letten2017}
A.D. Letten, P.J. Ke, and T.~Fukami.
\newblock Linking modern coexistence theory and contemporary niche theory.
\newblock \emph{Ecological Monographs}, 87:\penalty0 161--177, 2017.

\bibitem[Lockwood et~al.(1997)Lockwood, Powell, Nott, and Pimm]{lockwood1997}
J.L. Lockwood, R.D. Powell, M.P. Nott, and S.L. Pimm.
\newblock Assembling ecological communities in time and space.
\newblock \emph{Oikos}, 80:\penalty0 549--553, 1997.

\bibitem[MacArthur and Levins(1967)]{macarthur_levins1967}
R.~MacArthur and R.~Levins.
\newblock The limiting similarity, convergence, and divergence of coexisting
  species.
\newblock \emph{The American Naturalist}, 101:\penalty0 377--385, 1967.

\bibitem[Ma{\~{n}\'e}(1983)]{mane-83}
R.~Ma{\~{n}\'e}.
\newblock \emph{Ergodic Theory and Differentiable Dynamics}.
\newblock Springer-Verlag, New York, 1983.

\bibitem[May and Leonard(1975)]{may-leonard-75}
R.~M. May and W.~Leonard.
\newblock Nonlinear aspects of competition between three species.
\newblock \emph{SIAM Journal of Applied Mathematics}, 29:\penalty0 243--252,
  1975.

\bibitem[Mc{G}ehee and Armstrong(1977)]{mcgehee-armstrong-77}
R.~Mc{G}ehee and R.~A. Armstrong.
\newblock Some mathematical problems concerning the ecological principle of
  competitive exclusion.
\newblock \emph{J. Differential Equations}, 23:\penalty0 30--52, 1977.

\bibitem[Mierczy\'{n}ski and Schreiber(2002)]{jmaa-02}
J.~Mierczy\'{n}ski and S.~J. Schreiber.
\newblock Kolmogorov vector fields with robustly permanent subsystems.
\newblock \emph{Journal of Mathematical Analysis and Applications},
  267\penalty0 (1):\penalty0 329--337, 2002.
\newblock ISSN 0022-247X.

\bibitem[Morton et~al.(1996{\natexlab{a}})Morton, Law, Pimm, and
  Drake]{morton-law-pimm-drake-96}
R.~D. Morton, R.~Law, S.~L. Pimm, and J.~A. Drake.
\newblock On models for assembling ecological communities.
\newblock \emph{Oikos}, 75:\penalty0 493--499, 1996{\natexlab{a}}.

\bibitem[Morton et~al.(1996{\natexlab{b}})Morton, Law, Pimm, and
  Drake]{morton1996}
R.D. Morton, R.~Law, S.L. Pimm, and J.A. Drake.
\newblock On models for assembling ecological communities.
\newblock \emph{Oikos}, 75:\penalty0 493--499, 1996{\natexlab{b}}.

\bibitem[Paine(1966)]{paine1966}
R.T. Paine.
\newblock Food web complexity and species diversity.
\newblock \emph{The American Naturalist}, 100:\penalty0 65--75, 1966.

\bibitem[Patel and Schreiber(2018)]{jmb-18}
S.~Patel and S.J. Schreiber.
\newblock Robust permanence for ecological equations with internal and external
  feedbacks.
\newblock \emph{Journal of Mathematical Biology}, 77:\penalty0 79--105, 2018.

\bibitem[Post and Pimm(1983)]{post-pimm-83}
W.~M. Post and S.~L. Pimm.
\newblock Community assembly and food web stability.
\newblock \emph{Mathematical Biosciences}, 64:\penalty0 169--192, 1983.

\bibitem[Pugh and Shub(1989)]{pugh-shub-89}
C.~C. Pugh and M.~Shub.
\newblock Ergodic attractors.
\newblock \emph{Trans. Amer. Math. Soc.}, 312:\penalty0 1--54, 1989.

\bibitem[Roth et~al.(2017)Roth, Salceanu, and Schreiber]{roth-etal-17}
G.~Roth, P.L. Salceanu, and S.J. Schreiber.
\newblock Robust permanence for ecological maps.
\newblock \emph{SIAM Journal of Mathematical Analysis}, 49:\penalty0
  3527--3549, 2017.

\bibitem[Roughgarden(1974)]{roughgarden1974}
J.~Roughgarden.
\newblock Species packing and the competition function with illustrations from
  coral reef fish.
\newblock \emph{Theoretical Population Biology}, 5:\penalty0 163--186, 1974.

\bibitem[Sari and Lobry(2015)]{sari2015migrations}
T.~Sari and C.~Lobry.
\newblock Migrations in the {R}osenzweig-{M}ac{A}rthur model and the atto-fox
  problem.
\newblock \emph{Revue Africaine de la Recherche en Informatique et
  Math{\'e}matiques Appliqu{\'e}es}, 20:\penalty0 95--125, 2015.

\bibitem[Schreiber(1997)]{schreiber1997}
S.~J. Schreiber.
\newblock Generalist and specialist predators that mediate permanence in
  ecological communities.
\newblock \emph{Journal of Mathematical Biology}, 36:\penalty0 133--148, 1997.

\bibitem[Schreiber(1998)]{jde-98}
S.~J. Schreiber.
\newblock On growth rates of subadditive functions for semiflows.
\newblock \emph{Journal of Differential Equations}, 148:\penalty0 334--350,
  1998.

\bibitem[Schreiber(2000)]{jde-00}
S.~J. Schreiber.
\newblock Criteria for ${C}^r$ robust permanence.
\newblock \emph{Journal of Differential Equations}, 162:\penalty0 400--426,
  2000.

\bibitem[Schreiber(2004)]{jde-04}
S.~J. Schreiber.
\newblock Coexistence for species sharing a predator.
\newblock \emph{Journal of Differential Equations}, 196:\penalty0 209--225,
  2004.

\bibitem[Schreiber et~al.(2011)Schreiber, Bena\"{i}m, and Atchad\'{e}]{jmb-11}
S.~J. Schreiber, M.~Bena\"{i}m, and K.~A.~S. Atchad\'{e}.
\newblock Persistence in fluctuating environments.
\newblock \emph{Journal of Mathematical Biology}, 62:\penalty0 655--683, 2011.

\bibitem[Schuster et~al.(1979)Schuster, Sigmund, and Wolff]{schuster-etal-79}
P.~Schuster, K.~Sigmund, and R.~Wolff.
\newblock Dynamical systems under constant organization 3: Cooperative and
  competitive behavior of hypercycles.
\newblock \emph{Journal of Differential Equations}, 32:\penalty0 357--368,
  1979.

\bibitem[Serv\'{a}n and Allesina(2021)]{servan2021}
C.A. Serv\'{a}n and S.~Allesina.
\newblock Tractable models of ecological assembly.
\newblock \emph{Ecology Letters}, 24:\penalty0 1029--1037, 2021.

\bibitem[Sigmund and Schuster(1984)]{sigmund1984permanence}
K~Sigmund and P~Schuster.
\newblock Permanence and uninvadability for deterministic population models.
\newblock In \emph{Stochastic phenomena and chaotic behaviour in complex
  systems}, pages 173--184. Springer, 1984.

\bibitem[Smith and Waltman(1995)]{smith_waltman1995}
H.~L. Smith and P.~Waltman.
\newblock \emph{The {T}heory of the {C}hemostat: {D}ynamics of {M}icrobial
  {C}ompetition}, volume~13.
\newblock Cambridge University Press, 1995.

\bibitem[Turelli(1978)]{turelli-78a}
M.~Turelli.
\newblock Does environmental variability limit niche overlap?
\newblock \emph{Proceedings of the National Academy of Sciences of the United
  States of America}, 75:\penalty0 5085--5089, 1978.

\bibitem[Wolkowicz and Zhao(1998)]{wolkowicz1998}
G.S.K. Wolkowicz and X.-Q. Zhao.
\newblock $ n $-species competition in a periodic chemostat.
\newblock \emph{Differential and Integral Equations}, 11\penalty0 (3):\penalty0
  465--491, 1998.

\bibitem[Zhao(2003)]{zhao-03}
Xiao-Qiang Zhao.
\newblock \emph{Dynamical systems in population biology}.
\newblock CMS Books in Mathematics/Ouvrages de Math\'ematiques de la SMC, 16.
  Springer-Verlag, New York, 2003.

\end{thebibliography}

\setcounter{section}{0}
\renewcommand{\thesection}{Appendix \Alph{section}}

\section{Proof of Theorem~\ref{thm:main}}\label{sec:proof1}

Throughout this proof, we assume that assumptions  \textbf{A1}--\textbf{A3} hold. As defined earlier, let $\S$ be the set of all subsets $S$ of $[n]$ 
such that $S = S(\mu)$ for some ergodic invariant measure $\mu$ on $\C_0$
i.e. the set of subcommunities. For any $z=(x,y)$, define $\pi_i z =x_i$. Recall that the $\alpha$-limit and $\omega$-limit sets of a point $z\in \K$ are given by $\alpha(z)=\{z':$ there a sequence $t_k\downarrow -\infty$ such that $\lim_{k\to\infty} z.t_k=z'\}$ and $\omega(z)=\{z':$ there a sequence $t_k\uparrow +\infty$ such that $\lim_{k\to\infty} z.t_k=z'\}$.

The key lemma for the proof is the following:

\begin{lemma} Let $S, T \in \S$ be two subcommunities. 
If there exists $z\in \Gamma_0$ such that $\alpha(z)\subset \F(S)$ and $\omega(z)\subset \F(T)$, then $S\rightarrow T$ in the invasion graph.
\end{lemma}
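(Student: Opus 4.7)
The plan is to use Birkhoff's ergodic theorem together with the identity $\log \pi_i z.t - \log \pi_i z = \int_0^t f_i(z.s)\,ds$, which is valid along the full orbit of $z$ (both time directions exist because $z$ lies in the compact invariant attractor $\Gamma$). The two nontrivial edge conditions will be handled by time averages in opposite directions.

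For $j\in T\setminus S$, I use the backward orbit. Since $j\in T$ and $\omega(z)\subset\F(T)$ is compact, $\pi_j z.t$ is bounded away from $0$ for large $t$; together with the invariance of $\{x_j=0\}$, this gives $\pi_j z.t>0$ for all $t\in\R$. Since $j\notin S$ and $\alpha(z)\subset\F(S)$, $\pi_j z.(-T)\to 0$ as $T\to\infty$, so by the log-identity $\int_{-T}^{0}f_j(z.s)\,ds=\log \pi_j z-\log \pi_j z.(-T)\to +\infty$. Extracting a subsequence $T_k\uparrow\infty$ along which the empirical measures $\frac{1}{T_k}\int_{-T_k}^{0}\delta_{z.s}\,ds$ converge weakly produces an invariant probability measure $\mu$ supported in $\alpha(z)\subset\F(S)$. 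Disjointness of the faces $\F(S')$ for distinct $S'$ ensures that every ergodic component of $\mu$ has species support exactly $S$. By A3a and A3b, $r_j(\nu)$ is nonzero and has the same sign as $r_j(S)$ for every such ergodic component $\nu$, so $\int f_j\,d\mu$ has this strict sign; on the other hand, weak convergence and boundedness of $f_j$ on $\Gamma$ give $\int f_j\,d\mu=\lim_k \frac{1}{T_k}\int_{-T_k}^{0}f_j(z.s)\,ds\ge 0$. Together these force $r_j(S)>0$.

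A mirror-image argument on the forward orbit handles the other condition $r_i(T)<0$ for $i\in S\setminus T$: the symmetric reasoning yields $\pi_i z.t>0$ for all $t$, $\pi_i z.T\to 0$ as $T\to\infty$, a forward empirical-measure limit $\nu$ supported in $\omega(z)\subset \F(T)$ with ergodic components of species support $T$, and $\int f_i\,d\nu=\lim_k \frac{\log \pi_i z.T_k-\log \pi_i z}{T_k}\le 0$. Because this integral has the same sign as $r_i(T)$ by A3, we conclude $r_i(T)<0$. The requirement $S\ne T$ is part of the hypothesis that $S$ and $T$ are two distinct subcommunities.

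The main obstacle is the step that lets us pass from the invariant limit measures $\mu$ and $\nu$ to statements about the subcommunity-level growth rates $r_j(S)$ and $r_i(T)$. This demands that each ergodic component of the weak limit have species support equal to the whole of $S$ (resp.\ $T$); without it, A3b cannot identify the sign of $\int f_j\,d\mu$ with $\sgn r_j(S)$. The observation that saves us is simply that an invariant measure fully concentrated on the single open face $\F(S)$ cannot have an ergodic component concentrated on any smaller face, because the faces $\F(S')$ corresponding to different subsets of $[n]$ are pairwise disjoint.
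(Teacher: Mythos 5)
Your argument is correct and takes essentially the same route as the paper's proof: backward and forward occupation measures along the orbit of $z$, the identity $\log\pi_\ell z.t-\log\pi_\ell z=\int_0^t f_\ell(z.s)\,ds$, a Krylov--Bogolyubov weak* limit giving an invariant measure concentrated on $\F(S)$ (resp.\ $\F(T)$), and the ergodic decomposition combined with \textbf{A3} to turn the sign of the time average into the strict sign of $r_j(S)$ (resp.\ $r_i(T)$). The only cosmetic difference is that the paper extracts a single ergodic component with $r_\ell(\nu)\ge 0$ and then applies \textbf{A3a}--\textbf{A3b}, whereas you argue that all ergodic components share the sign of $r_j(S)$; the two steps are equivalent under \textbf{A3}.
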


\begin{proof} As $\alpha(z)\subset \F(S)$ and $\omega(z)\subset \F(T)$, we have that $\pi_\ell z>0$ for all $\ell \in S\cup T$.

First, we show that $r_\ell(S)>0$ for all $\ell\in T\setminus S$. For all $t>0$, let $\eta^-_t$ be the probability measure defined by $\int_\C h(z')\eta^-_t(dz')=\frac{1}{t}\int_0^t h(z.(-s))ds$ for any continuous function $h:\C\to \R$. Since $\alpha(z)\subset \F(S)$, there exists a sequence of times $t_k\uparrow \infty$ and a probability measure $\eta^-$  satisfying $\eta^-(\F(S))=1$ such that $\eta^-_{t_k}$ converges in the weak* topology as $k\uparrow\infty$. Furthermore, the classical argument of the Krylov-Bogolyubov theorem~\citet{kryloff1937theorie} implies $\eta^-$ is invariant.

We have that $\log\frac{\pi_\ell z.(-t)}{\pi_\ell z}=-\int_0^tf_{\ell}(z.(-s))ds$ for
all $\ell\in S\cup T$ and $t>0$. As $\alpha(z)\subset \F(S)$,
\[
r_\ell(\eta^-)=\lim_{k\to\infty}\frac{1}{t_k}\int_0^{t_k}f_\ell(z.(-s))ds=-\lim_{k\to\infty}\frac{1}{t_k}\log \frac{\pi_\ell z.(-t_k)}{\pi_\ell z }\ge 0 
\]
for all $\ell \in S\cup T$. By the ergodic decomposition theorem, for each $\ell \in T \setminus S$, there exists an ergodic measure $\mu$ with $S(\mu)=S$ and $r_\ell(\mu)\ge 0$. Hence, assumption \textbf{A3a} implies that $r_\ell(S)>0$ for all $\ell\in T \setminus S$.

Second, we can use a similar argument to show that $r_\ell(T)<0$ for all $\ell \in S\setminus T$. In this case, we use the forward empirical measures $\eta_t^+$ defined by  $\int_\C h(z')\eta^+_t(dz')=\frac{1}{t}\int_0^t h(z.s)ds$ for any continuous function $h:\C\to \R$.
\end{proof}

Next, we construct a Morse decomposition of $\Gamma_0$ determined by the invasion graph. Recall, a finite collection of compact, isolated invariant sets $\{M_1,\dots,M_k\}$ with $M_i\subset \Gamma_0$ is a Morse decomposition of $\Gamma_0$ if for all $z\in \Gamma_0 \setminus \cup_i M_i$ there exist $j>i$ such that $\alpha(z)\subset M_i$ and $\omega(z)\subset M_j$.

\begin{lemma} 
Let $k \in \{ 0,1,\dots , n-1\}$, and $\S_k =\{I \in \S: \abs{I} \leq k\}$.
Suppose $\IG$ is acyclic. 
Then: 

For each $I \in \S_k$ there is a 
nonempty compact invariant subset $M_I  \subset \F(I)$ such that
\begin{itemize}
    \item[1.] $\omega(z)\subset \bigcup_{I \in \S_k} M_I$ for all $z\in \bigcup_{I \in \S_k} \F(I)$.
    \item[2.] For each bounded complete solution $z.t \in \bigcup_{I \in \S_k} \F(I)$, $\alpha(x)\subset \bigcup_{I \in \S_k} M_I$.
    \item[3.]  Each $M_I$ is isolated in $\K$.
     \item[4.] The family of invariant sets $\{ M_I: I \in \S_k \}$ is a Morse decomposition of $\bigcup_{I \in \S_k} \F(I) \cap \Gamma_0$.
\end{itemize}
\end{lemma}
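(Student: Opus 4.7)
The plan is induction on $k$. For the base case $k=0$, the only element of $\S_0$ is $\emptyset$ (which always lies in $\S$), and $\F(\emptyset)=\{0\}\times Y$ is closed and invariant in $\K$. Set $M_\emptyset:=\F(\emptyset)\cap\Gamma$; this is nonempty by dissipativeness, and properties~1--4 hold trivially since $\bigcup_{I\in\S_0}\F(I)\cap\Gamma_0=M_\emptyset$ is a single compact isolated invariant set.

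For the inductive step, suppose $\{M_J:J\in\S_{k-1}\}$ has been constructed satisfying~1--4; keep these and construct $M_I$ for $I\in\S$ with $|I|=k$. For each such $I$, I would work inside the compact invariant set $B_I:=\overline{\F(I)}\cap\Gamma=\{z\in\Gamma:x_j=0\text{ for }j\notin I\}$. Its boundary $\partial B_I=\bigcup_{J\subsetneq I}\F(J)\cap B_I$ already carries a Morse decomposition $\{M_J:J\subsetneq I,\ J\in\S\}$ by the inductive hypothesis. Define $M_I$ as the union of all chain-recurrent components of the restricted flow on $B_I$ that lie entirely in $\F(I)$; under A3a--A3b this is a finite union of isolated invariant sets, hence compact, invariant, contained in $\F(I)$, and isolated in $\K$. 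Non-emptiness follows from $I\in\S$: pick an ergodic $\mu$ with $S(\mu)=I$, then the Poincar\'e-recurrence argument used to prove Lemma~1 of the main text produces a recurrent, bounded, complete orbit in $\F(I)$, whose closure lies inside some chain-recurrent component of $M_I$.

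To verify property~4, take $z\in\bigcup_{I\in\S_k}\F(I)\cap\Gamma_0$ not contained in any $M_I$. Passing to weak-$\ast$ limits of its backward and forward empirical measures, and decomposing into ergodic components (whose species supports are well-defined by A3a--A3b), one obtains $\alpha(z)\subset M_J$ and $\omega(z)\subset M_I$ for some $J\neq I$ in $\S_k$; the key Lemma stated immediately above then forces a directed edge $J\to I$ in $\IG$. Since $\IG$ is finite and acyclic, a topological sort of $\S_k$ produces a total order in which $\omega$-limits always sit at higher index than $\alpha$-limits, matching the Morse-decomposition convention fixed in the paper. Properties~1 and~2 are immediate consequences.

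I expect the main obstacle to be showing that chain-recurrent components of $B_I$ which touch $\F(I)$ are entirely contained in $\F(I)$ (equivalently, that $M_I$ is a closed subset of $\K$ lying in $\F(I)$, not merely in $\overline{\F(I)}$). Because $\F(I)$ is not closed, such a candidate component could a priori accumulate on $\partial B_I$. Ruling this out is precisely where acyclicity of $\IG$ must be invoked: such accumulation would produce a sequence of interior orbits whose empirical measures converge to an ergodic measure supported on some $\F(J)$ with $J\subsetneq I$; applying the preceding key Lemma once to the limit measure and once to the approaching interior orbits then yields directed edges of $\IG$ in both directions between $J$ and $I$, contradicting acyclicity. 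Once closedness is secured, isolation of each $M_I$ and the Morse ordering in property~4 follow by the standard arguments recorded in the Garay--Hofbauer and Hofbauer--So references already cited.
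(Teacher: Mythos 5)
Your skeleton (induction on $k$, the key connecting-orbit lemma plus a topological sort of the acyclic graph for the ordering, the base case $M_\emptyset=\F(\emptyset)\cap\Gamma$) matches the paper, but the construction of $M_I$ is where the proposal breaks. The paper takes $M_I$ to be the \emph{maximal compact invariant subset of $\F(I)$}; this exists and is compactly contained in $\F(I)$ precisely because, by the inductive hypothesis, the sets $M_J$ with $J\subsetneq I$ form a Morse decomposition of $\partial\F(I)\cap\Gamma$ by isolated invariant sets, so that boundary is itself an isolated invariant set and no invariant set can accumulate on it. Your $M_I$ --- the union of the chain-recurrent components of $B_I$ lying entirely in $\F(I)$ --- is strictly smaller in general: it omits connecting orbits inside the open face between two of its chain components (e.g.\ a saddle--sink heteroclinic within $\F(I)$). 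For such an orbit $z$ one has $\alpha(z)\subset M_I$ and $\omega(z)\subset M_I$ but $z\notin M_I$, which violates the strict ordering $j>i$ required of a Morse decomposition; property 4 therefore fails for your definition. The claim that \textbf{A3a}--\textbf{A3b} make this union finite, compact and isolated is also unsupported: \textbf{A3} constrains only the transverse per-capita growth rates of missing species and says nothing about the internal chain-recurrence structure of a face, which may have infinitely many components accumulating on $\partial B_I$. Your proposed rescue via acyclicity does not work as stated: the key lemma needs an actual orbit with $\alpha$-limit in $\F(J)$ and $\omega$-limit in $\F(I)$, which a sequence of chain components accumulating on the boundary does not supply, and for nested $J\subsetneq I$ a two-cycle $J\to I\to J$ is already impossible by the well-definedness of $r_j(J)$ under \textbf{A3} alone --- acyclicity is not the relevant hypothesis here. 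The paper invokes acyclicity only once, to order the Morse sets in property 4.

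The second gap is property 3, isolation of $M_I$ in all of $\K$ rather than merely in $B_I$, which you delegate to ``standard arguments'' in the cited references. This is the one step where \textbf{A3} does genuine work and it is not standard: the paper supposes $M_I$ is not isolated, extracts complete orbits $z^\eps$ staying $\eps$-close to $M_I$ with $\pi_j z^\eps.t>0$ bounded for all $t\in\R$ and some $j\notin I$, forms weak$^*$ limits of the forward and backward empirical measures to get invariant measures with $r_j(\mu^\eps_+)\le 0$ and $r_j(\mu^\eps_-)\ge 0$, and then lets $\eps\to 0$ to obtain invariant measures supported on $\F(I)$ whose ergodic components contradict \textbf{A3a}--\textbf{A3b}. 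Without this step the induction cannot close, since isolation of the $M_J$ at stage $k-1$ is exactly what makes $\partial\F(I)$ isolated at stage $k$. I recommend reworking the construction around the maximal-compact-invariant-set definition and supplying the empirical-measure isolation argument explicitly.
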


\begin{proof} We prove this lemma by induction on $k$. 

If $k = 0$, then $M_{\emptyset} = \{0\}$ is hyperbolic, and, consequently,  isolated in $\C$. Properties 1, 2, and 4, hold immediately.



Suppose the Lemma holds for $k-1$, and all the $M_I$ for $\abs{I} < k$ are given. Then we define $M_I$ for $\abs{I} = k$ as the maximal compact invariant subset in $\F(I)$. It exists (i.e.\ there are no invariant sets arbitrarily close to the   boundary of $\F(I)$) as the 
family of  $M_J$ with $J \subsetneq I$ forms a Morse decomposition of the boundary $\partial \F(I)$ of $\F(I)$, each $M_J$ is isolated, and hence $\partial \F(I)$ is isolated. 
Hence properties 1. and 2. hold.

To show property 3., suppose to the contrary that $M_I$  is not isolated in $\C$. Then for every $\eps > 0$ there is a 
$z^\eps\in \C$ such that  $\mbox{dist}(z^\eps.t, M_I) < \eps$ but $z^\eps.t \notin \F(I)$ for all $t \in \R$. Hence there is a $j \notin I$ with $\pi_j z^\eps.t > 0$ for all $t\in \R$. As in the proof of Lemma 2, we can find invariant measures $\mu^\eps_+, \mu^\eps_-$ with $r_j( \mu^\eps_+) \le 0$ and $r_j( \mu^\eps_-) \ge 0$. Passing to appropriate subsequences $\eps_m \to 0$, these measures converge weak$^*$, $ \mu^{\eps_m}_+ \to \mu_+ ,  \mu^{\eps_m}_- \to \mu_-$ to (not necessarily ergodic) invariant measures supported on $\F(I)$. These invariant measures satisfy
$r_j( \mu_+) \le 0$ and $r_j( \mu_-) \ge 0$ that contradicts assumption {\bf A3}. Thus, property 3. holds. 

Finally, the assumption that $\IG$ is acyclic implies property 4 by choosing a suitable order on $\S_k$.
\end{proof}

Taking $k =n-1$ we get

\begin{lemma} 
If $\IG$ is acyclic then the  
family of invariant sets $\{ M_I: I \in \S \}$ is a Morse decomposition of $\Gamma_0$.
\end{lemma}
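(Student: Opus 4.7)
My plan is to apply Lemma 3 at $k=n-1$ and then extend its conclusion from $\bigcup_{I\in\S}\F(I)\cap\Gamma_0$ to all of $\Gamma_0$. Since every $I\in\S$ is a proper subset of $[n]$, we have $|I|\le n-1$, so $\S_{n-1}=\S$. Lemma 3 therefore supplies compact isolated invariant sets $\{M_I:I\in\S\}$ with $M_I\subset\F(I)$, and an ordering consistent with the edges of $\IG$, such that properties 1--4 hold on $\bigcup_{I\in\S}\F(I)\cap\Gamma_0$. What remains is to cover those points of $\Gamma_0$ that sit in some face $\F(I)$ with $I\notin\S$ (e.g.\ heteroclinic trajectories through a face that carries no ergodic measure of its own).

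For the extension step, I would first observe that no nonempty compact invariant subset of $\C$ is contained in $\F(I)$ when $I\notin\S$: by Krylov--Bogolyubov and the ergodic decomposition theorem, any such set would support an ergodic invariant measure $\mu$ with $\supp(\mu)\subset\F(I)$; invariance of $\F(I)$ then forces $S(\mu)=I$, contradicting $I\notin\S$. Combined with the isolation of each $M_J$ for $J\subsetneq I$ with $J\in\S$ established in Lemma 3, a descent argument on $|I|$ then yields, for any $z\in\F(I)\cap\Gamma_0$ with $I\notin\S$, that both $\alpha(z)$ and $\omega(z)$ lie in $\partial\F(I)\cap\Gamma_0\subset\bigcup_{J\subsetneq I}\F(J)\cap\Gamma_0$, and iterating pushes them into $\bigcup_{J\in\S}M_J$. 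The recursion terminates because $\emptyset\in\S$.

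To finish, I would invoke Lemma 2 to produce the Morse ordering. For any $z\in\Gamma_0\setminus\bigcup_{I\in\S}M_I$, the previous paragraph together with connectedness of $\alpha(z)$ and $\omega(z)$ (valid since the flow is continuous and $\Gamma_0$ is compact) and the pairwise disjointness and isolation of the $M_I$ confines $\alpha(z)\subset M_{I_\alpha}$ and $\omega(z)\subset M_{I_\omega}$ to single Morse sets for some $I_\alpha,I_\omega\in\S$. Lemma 2 then supplies a directed edge $I_\alpha\to I_\omega$ in $\IG$, and acyclicity of $\IG$ permits a topological sort of $\S$ that provides the required Morse ordering on $\{M_I:I\in\S\}$. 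I expect the main obstacle to be the descent step—showing that the limit sets of an orbit in a non-subcommunity face $\F(I)$ must land strictly inside $\bigcup_{J\subsetneq I}\F(J)$—which hinges on the isolation property from Lemma 3 together with the ergodic-decomposition argument ruling out invariant mass inside $\F(I)$ when $I\notin\S$.
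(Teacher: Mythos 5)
Your overall route is the paper's: the paper obtains this lemma in a single line by taking $k=n-1$ in Lemma 3, and your observation that $\S_{n-1}=\S$ (every subcommunity being a proper subset of $[n]$) is exactly the justification. Your extension step addresses a real discrepancy that the paper glosses over---property 4 of Lemma 3 as literally stated only gives a Morse decomposition of $\bigcup_{I\in\S}\F(I)\cap\Gamma_0$, while $\Gamma_0$ may meet faces $\F(I)$ with $I\notin\S$---and your Krylov--Bogolyubov/ergodic-decomposition argument that such a face contains no nonempty compact invariant set is correct and is the right first ingredient.

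The descent step, however, does not close as stated, and for a reason beyond the one you flag. The absence of compact invariant subsets of $\F(I)$ for $I\notin\S$ shows only that $\omega(z)$ \emph{meets} $\partial\F(I)$, not that $\omega(z)\subset\partial\F(I)$: since both $\omega(z)$ and $\F(I)$ are invariant, any $w\in\omega(z)\cap\F(I)$ carries its entire bounded complete orbit inside $\omega(z)\cap\F(I)$, so $\omega(z)$ could a priori be a heteroclinic cycle among invariant sets in lower faces whose connecting orbits thread through the open face $\F(I)$ (this is precisely what happens on the boundary of a rock--paper--scissors system). Isolation of the $M_J$ plus ergodic decomposition cannot exclude this; you need the acyclicity of $\IG$, via Lemma 2 applied to those connecting orbits (or a Butler--McGehee-type argument), already at the stage of confining the limit sets---not only at the end to produce the topological sort. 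Likewise, the iteration ``pushes them into $\bigcup_J M_J$'' is applied to an invariant \emph{set}, not a single orbit, so without acyclicity the recursion need not terminate where you want; and orbits homoclinic to a single $M_I$ must be excluded separately (here assumption \textbf{A3} and the argument from property 3 of Lemma 3 do the work). To be fair, the paper supplies no more detail at this point than you do, so your proposal is at least as explicit as the source; but the step you yourself identify as the main obstacle is genuinely the place where the argument is incomplete.
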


\bb We also need the following lemma which follows from Assumption \textbf{A3} and \citet[Corollary 1]{jde-98} (see, also, \citet[Exercise I.8.5]{mane-83}).
\begin{lemma}
Let $I\in \S$ and $i\in [n]\setminus I$ be such that $r_i(I)>0$. Then there exists $\tau>0$ and $\alpha>0$ such that 
\[
\frac{1}{\tau}\int_0^\tau f_i(z.t)dt \ge \alpha
\]

for all $z\in M_I.$
\end{lemma}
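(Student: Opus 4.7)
The overall plan is to reduce the claim to the statement that $\int f_i\,d\mu>0$ for every invariant probability measure $\mu$ on $M_I$, and then invoke a standard Krylov--Bogolyubov-type compactness argument (as in \citet[Corollary~1]{jde-98} or Exercise~I.8.5 of \citet{mane-83}) that converts such a uniform sign condition into the desired uniform lower bound on finite-time averages.

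For the reduction, I would first exploit that by construction $M_I$ is a compact invariant subset of the \emph{open} face $\F(I)$. Compactness together with the defining condition $\pi_j z>0$ for $j\in I$ on $\F(I)$ yields a uniform $\delta>0$ with $\pi_j z\ge\delta$ for all $z\in M_I$ and $j\in I$. Since the faces $\{\F(S)\}_{S\subset[n]}$ partition $\K$, any ergodic invariant measure $\nu$ supported in $M_I$ satisfies $\nu(\F(I))=1$, and hence has species support $S(\nu)=I$. Assumption \textbf{A3b} combined with the hypothesis $r_i(I)>0$ then gives $r_i(\nu)>0$ for every such ergodic $\nu$, and the ergodic decomposition theorem upgrades this pointwise statement to $\int f_i\,d\mu>0$ for \emph{every} invariant probability measure $\mu$ on $M_I$.

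For the compactness step I would argue by contradiction. If no valid pair $(\tau,\alpha)$ exists, then for each integer $n\ge 1$ one can select $z_n\in M_I$ with $\frac{1}{n}\int_0^n f_i(z_n.t)\,dt<\frac{1}{n}$. The empirical measures $\mu_n:=\frac{1}{n}\int_0^n \delta_{z_n.t}\,dt$ all live on the compact set $M_I$, so weak-$\ast$ compactness extracts a subsequential limit $\mu$ on $M_I$. The classical Krylov--Bogolyubov estimate shows $\mu$ is invariant, and weak-$\ast$ convergence together with continuity of $f_i$ yields $\int f_i\,d\mu\le 0$, contradicting the strict positivity established in the previous paragraph.

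I expect the main obstacle to be the reduction step, specifically ensuring that no ergodic measure supported on $M_I$ has species support strictly smaller than $I$. This uses in an essential way that $M_I$ sits inside the open face $\F(I)$ rather than merely in its closure—a fact that is built into the inductive construction of $M_I$ earlier in Appendix~A and ultimately relies on the acyclicity of $\IG$ through the isolation argument there. Once that foothold is secured, the remainder of the proof is routine and follows a well-trodden compactness template.
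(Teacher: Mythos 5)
Your proposal is correct and matches the paper's route: the paper disposes of this lemma by combining Assumption \textbf{A3} with the cited compactness result (Corollary 1 of the 2000 JDE paper, or Exercise I.8.5 of Ma\~n\'e), and your argument supplies exactly that — the reduction via $M_I\subset\F(I)$, \textbf{A3b}, and ergodic decomposition to strict positivity of $\int f_i\,d\mu$ over all invariant measures on $M_I$, followed by the standard Krylov--Bogolyubov contradiction argument that is the content of the cited corollary. No gaps.
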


Assume that for every $I$, there is $j$ such that $r_j(I)>0$. Then Lemmas 4 and 5 and \citep[Theorem 2]{jmb-18} imply \eqref{eq:main} is robustly permanent. 

\ee


\section{\bb Proof of Proposition~\ref{prop:IS}\ee}\label{sec:proof2}
\bb
Assume $\IS$ is sequentially permanent. Without loss of generality, we can assume the sequence is $\{1,2,\dots,n\}$. Let $C$ be a matrix with $\sgn(C)=\IG$. Let $c_+>0$ and $-c_-<0$, respectively, be the minimum of the positive and negative elements of $C$. By the definition of sequential permanence, $C_i([n-1])=r_i([n-1])=0$ for $i\le n-1$ and $\sgn(C_n([n-1]))=r_n([n-1])=+1$. Hence, for any choice of $p\gg 0$, $\sum_i p_i C_i([n-1])>0$. Choose $p_n=1$. Next, for any $S\in \S$ such that $\{1,\dots, n-2\}\subset S$,  $C_i(S)=0$ for $i\le n-2$, $C_{n-1}(S)\ge c_+$, and $C_{n}(S)\ge -c_-$. Hence, for any choice of $p=(p_1,\dots,p_{n-1},1)\gg 0$, $\sum_i p_i C_i(S)\ge p_{n-1}c_+-c_-$. Choose $p_{n-1}=2c_-/c_+$. Then, $\sum_i p_i r_i(S)>0$ for any $S\in \S$ such that $\{1,\dots, n-2\}\subset S$. Proceeding inductively (i.e. choosing $p_k\ge 2(p_{k+1}+\dots +p_{n-1}+1)c_-/c_+)$, we get $p=(p_1,\dots,p_k)\gg 0$ such that $Cp\gg 0.$

Now suppose $\IG$ is not sequentially permanent. We begin by assuming this failure of sequential permanence occurs at the first step i.e. there is no species such that $r_i(S)\ge 0$ for all $S\in \S.$ Then for each species $i$, there is  $S_i\in \S$ such that $r_i(S_i)<0$. Let $C$ be such that its positive entries equal  $1/n$,  its negative entries equal $-1$, and $\sgn(C)=\IS$.  Suppose, to the contrary, there exists $p=(p_1,\dots,p_n)\gg 0$ such that $Cp\gg 0$.  Then $0<\sum_i p_i C_i(S_j)\le -p_j + \sum_{i\neq j} p_i/n $ for any $j\in [n]$. Adding these $n$ inequalities leads to a contradiction. 
This completes the proof when sequential permanence fails at the first step. Now suppose that the definition fails at the $k+1$ step with $k<n$ i.e. (i) there exist distinct species $\ell_1,\ell_2,\dots,\ell_k$ such that column $\ell_i$ of $\IS_i$ has only non-negative entries, and (ii) every column of $\IS_{k+1}$ has at least one negative entry. Then we can use the same argument restricted to $\IS_{k+1}$ as $r_i(S)=0$ for all $i\le k$ and $S\in \S$ satisfying $[k]\subset S.$
\ee 

\section{Parameter values for Figure~\ref{fig:LV}\label{sec:parms}}
Lotka-Volterra parameters for the coexistence panels in Figure~\ref{fig:LV}
\[
A=\begin{pmatrix}
-1.02 & -0.71 & -0.84 & -0.47 & -1.42 \\ 
  -0.43 & -2.19 & -0.35 & -0.73 & -1.03 \\ 
 -1.31 & -1.10 & -1.83 & -0.59 & -0.74 \\ 
  -1.20 & -0.51 & -0.01 & -1.99 & -0.67 \\ 
 -0.47 & -1.00 & -1.23 & -1.11 & -1.42 \\ 
\end{pmatrix}\qquad
b=\begin{pmatrix}
1.00 \\ 
  1.00 \\ 
  1.00 \\ 
  1.00 \\ 
  1.00 \\ 
  \end{pmatrix}
\]
Lotka-Volterra parameters for the non-coexistence panels in Figure~\ref{fig:LV}
\[
A=\begin{pmatrix}
-2.23 & -0.86 & -0.96 & -0.18 & -0.04 \\ 
  -1.07 & -1.58 & -1.13 & -0.79 & -0.20 \\ 
  -1.45 & -0.77 & -1.14 & -0.06 & -1.23 \\ 
  -0.12 & -0.55 & -0.46 & -2.41 & -1.35 \\ 
  -0.08 & -0.33 & -1.49 & -0.10 & -1.14 \\ 
\end{pmatrix}\qquad
b=\begin{pmatrix}
1.00 \\ 
  1.00 \\ 
  1.00 \\ 
  1.00 \\ 
  1.00 \\ 
  \end{pmatrix}
\]

\end{document}